\newcommand{\RR}{\mathbb{R}}
\newcommand{\NN}{\mathbb{N}}
\newcommand{\eqdf}{\triangleq}
\newcommand{\ppath}{\text{path}}
\newcommand{\parent}{\text{parent}}
\newcommand{\depth}{\text{depth}}
\newcommand{\opt}{\text{\textsc{opt}}}
\newcommand{\alg}{\text{\textsc{alg}}}
\newcommand{\algc}{\text{$\alg_c$}}
\newcommand{\algp}{\text{\sc hs$_p$}}
\newcommand{\algd}{\text{\sc hs$_d$}}
\newcommand{\vr}{{{\chi_{\text{\textup{vr}}}}}}
\newcommand{\cmax}{c_{\max}}
\newcommand{\cmin}{c_{\min}}
\newtheorem{definition}{Definition}
\newtheorem{lemma}[definition]{Lemma}
\newtheorem{claim}[definition]{Claim}
\newtheorem{proposition}[definition]{Proposition}
\newtheorem{theorem}[definition]{Theorem}
\newtheorem{corollary}[definition]{Corollary}
\DeclareMathOperator*{\argmin}{arg\,min}
\DeclareMathOperator*{\argmax}{arg\,max}
\begin{document}
\title{Hitting Sets Online and Unique-Max Coloring}
\author{Guy Even\thanks{
School of Electrical Engineering, Tel-Aviv Univ., Tel-Aviv 69978, Israel.}
 \and
Shakhar Smorodinsky
\thanks{
Mathematics Department, Ben-Gurion University of the Negev, Be'er Sheva 84105, Israel.}
}

 \maketitle
  \begin{abstract}
    We consider the problem of hitting sets online. The hypergraph
    (i.e., range-space consisting of points and ranges) is known in
    advance, and the ranges to be stabbed are input one-by-one in an
    online fashion.  The online algorithm must stab each range upon
    arrival. An online algorithm may add points to the hitting set but
    may not remove already chosen points. The goal is to use the
    smallest number of points. The best known competitive ratio for
    hitting sets online  by Alon~et al.~\cite{alon2009online} is
    $O(\log n \cdot \log m)$ for general hypergraphs, where $n$ and
    $m$ denote the number of points and the number of ranges,
    respectively.

    We consider hypergraphs in which the union of two intersecting
    ranges is also a range. Our main result for such hypergraphs is as
    follows. The competitive ratio of the online hitting set problem
    is at most the unique-max number and at least this number minus one.

    Given a graph $G=(V,E)$, let $H=(V,R)$ denote the hypergraph whose
    hyperedges are subsets $U\subseteq V$ such that the induced
    subgraph $G[U]$ is connected. We establish a new connection between
    the best competitive ratio for the online hitting set problem in
    $H$ and the vertex ranking number of $G$.  This connection
    states that these two parameters are equal.  Moreover, this
    equivalence is algorithmic in the sense, that given an algorithm
    to compute a vertex ranking of $G$ with $k$ colors, one can use
    this algorithm as a black-box in order to design a $k$-competitive
    deterministic online hitting set algorithm for $H$.  Also, given a deterministic
    $k$-competitive online algorithm for $H$, we can use it as a black
    box in order to compute a vertex ranking for $G$ with at most $k$
    colors. As a corollary, we obtain optimal online hitting set
    algorithms for many such hypergraphs including those realized by
    planar graphs, graphs with bounded tree width, trees, etc. This
    improves the best previously known general bound of Alon et al.
    \cite{alon2009online}.

    We also consider two geometrically defined hypergraphs. The first
    one is defined by subsets of a given set of $n$ points in the
    Euclidean plane that are induced by half-planes. We obtain an
    $O(\log n)$-competitive ratio. We also prove an $\Omega(\log n)$
    lower bound for the competitive ratio in this setting.  The second
    hypergraph is defined by subsets of a given set of $n$ points in
    the plane induced by unit discs. Since the number of subsets in
    this setting is $O(n^2)$, the competitive ratio obtained by Alon
    et al. is $O(\log^2 n)$. We introduce an algorithm with $O(\log
    n)$-competitive ratio.  We also show that any online algorithm for
    this problem has a competitive ratio of $\Omega(\log n)$, and
    hence our algorithm is optimal.
  \end{abstract}

\section{Introduction}

In the minimum hitting set problem, we are given a hypergraph $(X,R)$,
where $X$ is the ground set of points and $R$ is a set of hyperedges.
The goal is to find a ``small'' cardinality subset $S\subseteq X$ such that every
hyperedge is stabbed by $S$, namely, every hyperedge has a nonempty
intersection with $S$.

The minimum hitting set problem is a classical NP-hard
problem~\cite{Karp72}, and remains hard even for geometrically induced
hypergraphs (see \cite{hochbaum1985approximation} for several
references). A sharp logarithmic threshold for hardness of
approximation was proved by Feige~\cite{feige}. On the other hand, the
greedy algorithm achieves a logarithmic approximation
ratio~\cite{johnson1974approximation,chvatal1979greedy}. Better
approximation ratios have been obtained for several geometrically
induced hypergraphs using specific properties of the induced
hypergraphs~\cite{hochbaum1985approximation,anil1999covering,ben2005constant}.
Other improved approximation ratios are obtained using the theory of
VC-dimension and
$\varepsilon$-nets~\cite{bronnimann1995almost,even2005hitting,clarkson2007improved}.
Much less is known about online versions of the hitting set problem.

In this paper, we consider an online setting in which the set of
points $X$ is given in the beginning, and the ranges are introduced
one by one.  Upon arrival of a new range, the online algorithm may add
points (from $X$) to the hitting set so that the hitting set also
stabs the new range. However, the online algorithm may not remove
points from the hitting set.  We use the competitive ratio for our analysis, a
classical measure for the performance of online
algorithms~\cite{sleator1985amortized,borodin1998online}.

Alon et al. \cite{alon2009online} considered the online set-cover
problem for arbitrary hypergraphs.  In their setting, the ranges are
known in advance, and the points are introduced one by one. Upon
arrival of an uncovered point, the online algorithm must choose a
range that covers the point. Hence, by interchanging the roles of ranges
and points, the online set-cover problem and the online hitting-set problems are equivalent.
The online set cover algorithm presented by Alon et
al.~\cite{alon2009online} achieves a competitive ratio of $O(\log n
\log m)$ where $n$ and $m$ are the number of points and the number of
hyperedges respectively. Note that if $m\geq 2^{n/\log n}$, the
analysis of the online algorithm only guarantees that the competitive
ratio is $O(n)$; a trivial bound if one range is chosen for each
point.
%


\paragraph{Unique-maximum coloring.}
We consider two types of colorings. A coloring $c:X \rightarrow [0,k]$
is a \emph{unique-max coloring} of a hypergraph $H=(X,R)$ if, for each
range $r\in R$, exactly one point is colored by the color $\max_{x\in
  r} c(x)$ (c.f., \cite{CF-survey}).  A {\em vertex ranking} (also an {\em ordered coloring})
of a graph $G=(V,E)$ is a coloring of the vertices $c:V \rightarrow \{1,\ldots,k\}$
that satisfies the following property. Every simple path, the
endpoints of which have the same color $i$, contains a vertex with a
color greater than $i$~\cite{katchalski1995ordered,Schaffer89}.

\paragraph{Relation between unique-maximum coloring and the competitive ratio.}
We consider the competitive ratio for the hitting set problem as a
property of a hypergraph.  Namely, the competitive ratio of a
hypergraph $H=(X,R)$ is the competitive ratio of the best deterministic
online algorithm for the hitting set problem over $H$.  We say that a
hypergraph is $I$-type if the union of two intersecting ranges is
always a range.  Our main result (Theorem~\ref{thm:main}) shows a new
connection between the competitive ratio of an $I$-type hypergraph $H$
and the minimum number of colors required to color $H$ in a unique-max
coloring. In fact, we present ``black box'' reductions that construct
an online hitting set algorithm from a unique-max coloring, and
vice-versa.

\paragraph{Applications.}
Three applications of the main result are presented.  The first
application is motivated by the following setting.  Consider a
communication network $G=(V,E)$. This network is supposed to serve
requests for virtual private networks (VPNs). Each VPN request is a
subset of vertices that induces a connected subgraph in the network,
and requests for VPNs arrive online.  For each VPN, we
need to assign a server (among the nodes in the VPN) that serves the
nodes of the VPN. Since setting up a server is expensive, the goal is
to select as few servers as possible.

This application can be abstracted by considering hypergraphs $H$ that
are realized as the connected induced subgraphs of a given graph $G$.
This hypergraph captures the online problem in which the adversary
chooses subsets $V' \subseteq V$ such that the induced subgraph
$G[V']$ is connected, and the algorithm must stab the subgraphs.  A
direct consequence of the observation that every unique-max coloring
of $H$ is a vertex ranking of $G$
implies that the
competitive ratio of $H$ equals the vertex ranking number of $G$.
This application leads to improved optimal competitive ratios for
graphs that admit (hereditary) small balanced separators (see
Table~\ref{summary}).

Two more classes of hypergraphs are obtained geometrically as follows.
In both settings we are given a set $X$ of $n$ points in the plane. In
one hypergraph, the hyperedges are intersections of $X$ with half
planes.  In the other hypergraph, the hyperedges are intersections of
$X$ with unit discs.  Although these hypergraphs are not $I$-type, we
present an online algorithm for the hitting set problem for points in
the plane and unit discs (or half-planes) with an optimal competitive
ratio of $O(\log n)$. The competitive ratio of this algorithm improves
the $O(\log ^2 n)$-competitive ratio of Alon et al. by a logarithmic
factor.

An application for points and unit discs is the selection of access
points or base stations in a wireless network. The points model base
stations and the disc centers model clients. The reception range of
each client is a disc, and the algorithm has to select a base station
that serves a new uncovered client. The goal is to select as few base
stations as possible.

\paragraph{Organization.}
Definitions and notation are presented in Section~\ref{sec:prelim}.  In Section~\ref{sec:intervals}, we study the special case of
intervals on a line.  The main result is presented in
Section~\ref{sec:main}.  We apply the main result to hypergraphs
induced by connected subgraphs of a given graph in
Section~\ref{sec:connected}.  An online algorithm for hypergraphs
induced by points and half-planes is presented in
Section~\ref{sec:halfplanes}. An online algorithm for the case of
points and unit discs is presented in Section~\ref{sec:discs}.  We
conclude with open problems.

\section{Preliminaries}\label{sec:prelim}
\paragraph{The online minimum hitting set problem.}
Let $H=(X,R)$ denote a hypergraph, where $R$ is a set of nonempty
subsets of the ground set $X$.  Members in $X$ are referred to as
\emph{points}, and members in $R$ are referred to as
\emph{ranges} (or \emph{hyperedges}). A subset $S\subseteq X$
\emph{stabs} a range $r$ if $S\cap r \neq \emptyset$. A
\emph{hitting set} is a subset $S\subseteq X$ that stabs every
range in $R$. In the minimum hitting set problem, the goal is to find
a hitting set with the smallest cardinality.

In this paper, we consider the following online setting. The adversary
introduces a sequence $\sigma\eqdf \{r_i\}_{i=1}^s$ of ranges.  Let
$\sigma_i$ denote the prefix $\{r_1,\ldots,r_i\}$. The online
algorithm must compute a chain of hitting sets $C_1\subseteq C_2
\subseteq \cdots$ such that $C_i$ is a hitting set with respect to the
ranges in $\sigma_i$. In other words, upon arrival of the range $r_i$,
if $r_i$ is not stabbed by $C_{i-1}$, then the online algorithm adds a
point $x_i\in r_i$ to $C_{i-1}$ so that $C_i \eqdf C_{i-1}\cup \{x_i\}$
stabs all the ranges in $\sigma_i$.  If $C_{i-1}$ stabs the range
$r_i$, then the algorithm need not add a point, and $C_i\eqdf C_{i-1}$.

Fix a hypergraph $H$ and an online deterministic algorithm $\alg$.  The competitive
ratio of the algorithm $\alg$ with respect to $H$ is defined as
follows.  For a finite input sequence $\sigma=\{r_i\}_{i=1}^s$, let
$\opt(\sigma_i)\subseteq X$ denote a minimum cardinality hitting set for
the ranges in $\sigma_i$.  Let $\alg(\sigma) \subseteq X$ denote the
hitting set computed by an online algorithm $\alg$ when the input
sequence is $\sigma$.  Note that the sequence of minimum hitting sets
$\{\opt(\sigma_i)\}_{i=1}^s$ is not necessarily a chain of inclusions.
\begin{definition}\label{def:comp ratio}
  The \emph{competitive ratio} of a deterministic online hitting set algorithm
  $\alg$ is defined by
  \begin{align*}
    \rho_H(\alg) &\eqdf \max_{\sigma} \frac{|\alg(\sigma)|}{|\opt(\sigma)|}.
  \end{align*}

The \emph{competitive ratio}  of the hypergraph $H$ is defined by
  \begin{align*}
    \rho_H &\eqdf \min_{\alg} \rho_H (\alg).
  \end{align*}
\end{definition}
The definition of $\rho_H$ can be viewed as a hypergraph property.  It
equals the best competitive ratio achievable by \emph{any} online
deterministic algorithm with respect to the hypergraph $H$.

\paragraph{$I$-type hypergraphs.}
We now define a notion that captures an important property of the hypergraph of intervals over collinear points.
\begin{definition}
  A hypergraph $H=(X,R)$ is \emph{$I$-type} if it satisfies the following property:
  \begin{align*}
    \forall r_1,r_2\in R ~:~~~ r_1\cap r_2 \neq \emptyset \Rightarrow r_1\cup r_2 \in R.
  \end{align*}
\end{definition}

\paragraph{Unique-max colorings.}

Consider a hypergraph $H=(X,R)$ and a coloring $c:X \rightarrow
\NN$. For a range $r\in R$, let $\cmax (r) \eqdf \max \{c(x) \mid
x\in r\}$. Similarly, $\cmin (r) \eqdf \min \{c(x) \mid
x\in r\}$.
\begin{definition}
  A coloring $c:X \rightarrow \NN$ is a \emph{unique-max coloring} of
  $H=(X,R)$ if, for every range $r\in R$, there is a unique point
  $x\in r$ for which $c(x)=\cmax (r)$.
\end{definition}
Similarly, a coloring is unique-min if, for every range $r$, exactly
one point $x\in r$ is colored $\cmin(r)$.

The \emph{unique-max-chromatic number} of a hypergraph $H$, denoted by  $\chi_{um}(H)$,
is the least integers $k$ for which $H$ admits a unique-maximum coloring that uses only $k$ colors.

\paragraph{Vertex ranking.}
We define a coloring notion for graphs known as vertex ranking~\cite{katchalski1995ordered,Schaffer89}.
\begin{definition}
A \emph{vertex ranking} of a graph $G=(V,E)$ is a
  coloring $c:V \rightarrow \NN$ that satisfies the following
  property.  For every pair of distinct vertices $x$ and $y$ and for
  every simple path $P$ from $x$ to $y$, if $c(x) = c(y)$, then there exists
  an internal vertex $z$ in $P$ such that $c(x) < c(z)$.
\end{definition}

The {\em vertex ranking number} of $G$, denoted $\vr(G)$, is the least
integer $k$ for which $G$ admits a vertex ranking that uses only $k$
colors.

A vertex ranking of a graph $G$ is also a
proper coloring of $G$ since adjacent vertices must be colored by
different colors. On the other hand, a proper coloring is not
necessarily a vertex ranking as is easily seen by considering a path
graph $P_n$. This graph admits a proper coloring with $2$ colors but
this coloring is not a valid vertex ranking. In fact, $\vr(P_n)=
\lfloor \log_2 n \rfloor +1$, as proved in the following proposition
that has been proven several times
\cite{iyer1988optimal,katchalski1995ordered,even2003conflict}.

\begin{proposition}\label{prop:Pn}
$\vr(P_n)=
\lfloor \log_2 n \rfloor +1$.
\end{proposition}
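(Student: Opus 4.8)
The plan is to establish matching upper and lower bounds, both resting on a single structural observation about vertex rankings of paths: if $c$ is a vertex ranking of $P_n$ and $k$ is the largest color it uses, then the color $k$ is assigned to exactly one vertex. Indeed, if two distinct vertices $x,y$ both received color $k$, the unique simple subpath of $P_n$ joining them would contain no internal vertex of color exceeding $k$ (since $k$ is the maximum), violating the vertex-ranking property. I would use this fact repeatedly together with the remark that deleting the unique top-colored vertex splits the path into two subpaths whose restricted colorings remain valid vertex rankings and use only colors in $\{1,\dots,k-1\}$.

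For the lower bound, I would prove by induction on $k$ that any path admitting a vertex ranking with at most $k$ colors has at most $2^k-1$ vertices. The base case $k=1$ holds because two equally colored vertices require a higher-colored separator, so $P_1$ is the longest path that is rankable with a single color. For the inductive step, delete the unique vertex of color $k$; this yields two subpaths, each rankable with at most $k-1$ colors, hence each with at most $2^{k-1}-1$ vertices by the inductive hypothesis. Summing gives $n \le 2(2^{k-1}-1)+1 = 2^k-1$. Consequently $\vr(P_n)=k$ forces $n\le 2^k-1$, i.e.\ $k\ge \log_2(n+1)$, so $\vr(P_n)\ge \lceil \log_2(n+1)\rceil$.

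For the upper bound, I would exhibit a recursive coloring achieving $\lceil\log_2(n+1)\rceil$ colors. Assign the top color to a central vertex, so that its removal leaves two subpaths each with at most $\lfloor (n-1)/2\rfloor$ vertices, and recurse on each subpath with one fewer color. Validity follows from the structural observation: any two equally colored vertices either lie in a common recursive subpath, where the vertex-ranking property holds by induction, or are separated by a vertex colored at an earlier, higher recursion level, whose color therefore exceeds theirs. A routine induction shows this scheme succeeds with $k$ colors on every path with at most $2^k-1$ vertices, matching the lower bound.

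Finally, I would reconcile the two bounds with the stated formula by checking the elementary identity $\lceil \log_2(n+1)\rceil = \lfloor \log_2 n\rfloor + 1$, valid for every integer $n\ge 1$. The main obstacle I anticipate is not any single step but the bookkeeping of floors and ceilings: keeping the recursion's split sizes, the $2^k-1$ threshold, and the two logarithmic expressions mutually consistent requires care, especially for $n$ of the form $2^k$ and $2^k-1$, where both bounds are tight.
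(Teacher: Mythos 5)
Your proof is correct and takes essentially the same route as the paper: both arguments rest on the uniqueness of the maximum color in any vertex ranking and on the balanced-split recursion, and your bound ``rankable with $k$ colors $\Rightarrow n \le 2^k-1$'' is just the closed form of the paper's recurrence $f(n) \ge 1 + f\left(\left\lceil \frac{n-1}{2} \right\rceil\right)$. One trivial slip: removing a central vertex leaves subpaths with at most $\left\lceil \frac{n-1}{2} \right\rceil$ (not $\left\lfloor \frac{n-1}{2} \right\rfloor$) vertices, but since $\left\lceil \frac{2^k-2}{2} \right\rceil = 2^{k-1}-1$ the induction closes unchanged.
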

\begin{proof}
  Consider a vertex ranking $c$ of $P_n$ in which the highest
  color is used once to split the path into two disjoint paths as
  evenly as possible.  The number of colors $f(n)$ satisfies the
  recurrence $f(1)=1$ and
\begin{align*}
  f(n) &\leq
1+ f\left(\left\lceil\frac{n-1}{2} \right\rceil\right).
\end{align*}
It is easy to verify that $f(n)\leq 1+\lfloor \log_2 n \rfloor$.  For
a matching lower bound, consider a coloring and the point with the highest color. Note that the highest
color appears uniquely in $P_n$. This point
separates the path into two disjoint paths colored by one color less.
The length of one of these paths must be at least
$\left\lceil\frac{n-1}{2} \right\rceil$.  Hence $f(n)\geq 1+
f(\left\lceil\frac{n-1}{2} \right\rceil)$ and therefore $f(n)\geq 1 + \lfloor
\log_2 n \rfloor$, as required.
\end{proof}

\section{Special Case: Hitting set for Intervals on the Line}\label{sec:intervals}
Consider the hypergraph $H=(X,R)$ of intervals over $n$ collinear
points defined by:
\begin{align*}
  X&\eqdf \{1,2,\ldots, n\}\\
R &\eqdf \{[i,j] \mid 1\leq i \leq j \leq n\}.
\end{align*}

The competitive ratio of the online hitting-set algorithm of Alon et
al.~\cite{alon2009online} for the hypergraph of intervals over $n$
collinear points is $O(\log |X| \cdot \log |R|) = O(\log ^2 n)$.  In
this section we prove a better competitive ratio for this specific
hypergraph.

\begin{proposition}\label{prop:intervals}
  $\rho(H) = \lfloor \log_2 n \rfloor +1$.
\end{proposition}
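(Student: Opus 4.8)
The plan is to prove the proposition in two directions, establishing matching upper and lower bounds on the competitive ratio $\rho(H)$ for the interval hypergraph.

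For the **upper bound**, my plan is to design an explicit online algorithm achieving competitive ratio $\lfloor \log_2 n \rfloor + 1$, and I would exploit the connection to vertex ranking suggested by the paper's framework. Observe that the interval hypergraph is $I$-type: if two intervals $[i,j]$ and $[k,\ell]$ intersect, their union is again an interval. The key idea is to use a vertex ranking (equivalently, a unique-max coloring) of the path graph $P_n$ as a guide. By Proposition~\ref{prop:Pn}, $P_n$ admits a vertex ranking with exactly $\lfloor \log_2 n \rfloor + 1$ colors, obtained by recursively placing the highest color at the midpoint. The algorithm I would propose is: upon arrival of an interval $r$ that is not yet stabbed, add to the hitting set the point in $r$ of \emph{highest color} (which is unique by the unique-max property). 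To bound the number of points chosen, I would argue that the points selected can be charged against an optimal solution. The essential structural fact is that any two points chosen with the \emph{same} color must be ``separated'' by an optimal point, because the recursive midpoint structure guarantees that between any two equally-colored points lies a point of strictly higher color that itself must have been the reason some earlier interval was stabbed --- so within each color class the number of algorithm points is at most one more than the number of optimal points. Summing over the $\lfloor \log_2 n \rfloor + 1$ color classes would yield the bound.

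For the **lower bound**, my plan is to exhibit an adversary strategy forcing any deterministic online algorithm to use $\lfloor \log_2 n \rfloor + 1$ times the optimum. I would mimic the recursive lower-bound argument in the proof of Proposition~\ref{prop:Pn}. The adversary presents the full interval $[1,n]$ first; whichever point $x_1$ the algorithm picks splits $[1,n]$ into two subintervals, and the adversary recurses into the larger subinterval (of length at least $\lceil (n-1)/2 \rceil$), which the algorithm's chosen point does not stab. Because the hypergraph is $I$-type and every subinterval is a range, this recursion can continue for $\lfloor \log_2 n \rfloor + 1$ levels, forcing the algorithm to pick a new point at each level. Meanwhile the optimum for the final nested sequence of intervals is a single point (the innermost interval contains all of them by nestedness, so one point stabs all presented ranges). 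This gives a ratio of $\lfloor \log_2 n \rfloor + 1$ against an optimum of $1$.

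The **main obstacle** I anticipate is the charging argument in the upper bound: making precise why the algorithm selects at most one more point per color class than the optimum does, and in particular verifying that the unique-max structure forces the needed separation between same-colored selected points. The lower bound is comparatively routine once the recursive nesting of intervals is set up, since the $I$-type property guarantees all the required subintervals are legitimate ranges and the nestedness keeps the optimum at $1$. I would expect the cleanest write-up to defer most of the heavy lifting to the general machinery of the main theorem (Theorem~\ref{thm:main}), of which this proposition is essentially the one-dimensional special case, and here simply instantiate that machinery with the explicit $P_n$ ranking.
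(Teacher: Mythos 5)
Your lower bound is exactly the paper's argument (nested adversary: present $[1,n]$, recurse into the larger unstabbed half for $\lfloor \log_2 n\rfloor+1$ rounds while nestedness keeps the optimum at $1$ --- with the minor slip that the innermost interval is \emph{contained in} all presented ranges, not the reverse), and your upper-bound algorithm is also the paper's: stab each unstabbed interval at its unique highest-color point under a $(\lfloor\log_2 n\rfloor+1)$-color vertex ranking of $P_n$. The gap is precisely where you flagged it: the charging argument. The paper's key step is a claim about \emph{intervals}, not about selected \emph{points}: for each color $\gamma$, the intervals in $\sigma(\gamma)$ (those unstabbed on arrival and stabbed by a $\gamma$-colored point) are \emph{pairwise disjoint}. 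The proof is one line: if $r_1,r_2\in\sigma(\gamma)$ intersected, then $r_1\cup r_2$ would again be an interval whose maximum color $\gamma$ is attained at two distinct points (the stabber of $r_1$ and the stabber of $r_2$), contradicting the vertex-ranking property. Disjointness forces $|\opt(\sigma)|\ge|\sigma(\gamma)|$ for every $\gamma$ --- a bound against the \emph{whole} optimum, not a per-class share --- whence $|\alg_c(\sigma)|=\sum_\gamma|\sigma(\gamma)|\le(\lfloor\log_2 n\rfloor+1)\cdot|\opt(\sigma)|$.

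Your substitute claim fails on both counts. The justification is wrong: the higher-colored point lying between two equally-colored selected points need not have been ``the reason some earlier interval was stabbed'' --- it may never be selected by the algorithm nor belong to any optimal solution; nothing in the online process involves it. And the conclusion is too weak: ``at most one more algorithm point per color class than optimal points'' yields at best $|\alg_c(\sigma)|\le|\opt(\sigma)|+k$ (or $\le k\,(|\opt(\sigma)|+1)$, depending on how optimal points are apportioned to classes), where $k=\lfloor\log_2 n\rfloor+1$; when $|\opt(\sigma)|=1$ --- exactly the regime your own lower bound lives in --- this gives a ratio of $1+k$ or $2k$ rather than $k$, so the claimed equality would not follow. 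The fix is to drop point-charging entirely and argue disjointness of the intervals in each color class as above. Your fallback of invoking Theorem~\ref{thm:main} would prove the proposition (the interval hypergraph is $I$-type and separable, so Corollary~\ref{coro:tight} gives $\rho(H)=\chi_{um}(H)$, and $\chi_{um}(H)=\vr(P_n)$), but note that within the paper this is backwards: the proof of Theorem~\ref{thm:main} is explicitly modeled on this proposition's proof, so that route is circular in context, even though it is logically legitimate as a standalone argument.
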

\begin{proof}
  We begin by proving the lower bound $\rho(H) \geq \lfloor \log_2 n
  \rfloor +1.$ The adversary generates the sequence $\sigma\eqdf
  \{r_i\}$ of ranges to be stabbed.  Let $\{C_i\}_i$ denote the
  chain of hitting sets computed by the algorithm.  The first range
  consists of all the points, namely, $r_1=X$. In every step, the next range
  $r_{i+1}$ is chosen to be a larger interval in $r_i\setminus C_i$,
  namely, $|r_{i+1}| \geq \frac{|r_i|-1}{2}$.  While $r_i$ is not
  empty, the adversary forces the algorithm to stab each range by a
  distinct point.  In fact, the adversary can introduce such
  sequence consisting of at least $\lfloor \log_2 n \rfloor +1$ many ranges.
  Thus, $|C_i| =i$ if $i\leq \lfloor \log_2 n \rfloor +1$.  However,
  $r_1 \supset r_2 \supset \cdots$ is a decreasing chain, and hence,
  $|\opt(\sigma_i)|=1$, and the lower bound follows.

  The upper bound $\rho(H) \leq \lfloor \log_2 n \rfloor +1$ is proved
  as follows.  Let $c(x)$ denote a vertex ranking of the graph $P_n$
  that uses $\lfloor\log_2 n \rfloor +1$ colors (see
  Prop.~\ref{prop:Pn}).  Consider the deterministic hitting-set
  algorithm $\alg_c$ defined as follows.  Upon arrival of an unstabbed
  interval $[i,j]$, stab it by the point $x$ in the interval $[i,j]$
  with the highest color.  Namely $x\eqdf \argmax \{c(k) : i\leq k\leq
  j\}$.

We claim that $\rho_H(\alg_c)\leq 1+\lfloor \log_2 n\rfloor$.  The
proof is based on the following observation. Consider a color $\gamma$
and the subsequence of intervals $\sigma(\gamma)$ that consists of the
intervals $r_i$ in $\sigma$ that satisfy the following two properties:
\begin{inparaenum}[(i)]
\item Upon arrival $r_i$ is unstabbed.
\item Upon arrival of $r_i$, $\alg_c$ stabs
$r_i$ by a point colored $\gamma$.
\end{inparaenum}
We claim that the intervals in $\sigma(\gamma)$ are pairwise disjoint.
Indeed, if two intervals $r_1\neq r_2$ in $\sigma(\gamma)$ intersect,
then the maximum color in $r_1 \cup r_2$ is also $\gamma$, and it
appears twice in $r_1\cup r_2$.  This contradicts the definition of a
vertex ranking because $r_1\cup r_2$ is also an interval.  Thus, the
optimum hitting set satisfies $|\opt(\sigma)| \geq \max_{\gamma}
|\sigma(\gamma)|$.  But $|\alg_c (\sigma)| \leq (1+\lfloor \log_2
n\rfloor) \cdot \max_{\gamma} |\sigma(\gamma)|$, and hence
$\rho_H(\alg_c)\leq 1+\lfloor \log_2 n\rfloor$, as required.
\end{proof}

\section{The Main Result}\label{sec:main}
\begin{theorem}\label{thm:main}
    If a hypergraph $H=(X,R)$ is $I$-type, then $$\chi_{um} (H)-1 \leq  \rho(H) \leq \chi_{um} (H).$$
\end{theorem}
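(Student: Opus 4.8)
The plan is to prove the two inequalities separately. The upper bound $\rho(H)\le\chi_{um}(H)$ is obtained by turning a unique-max coloring into an online algorithm, directly generalizing the construction in the proof of Proposition~\ref{prop:intervals}; the lower bound $\rho(H)\ge\chi_{um}(H)-1$ is obtained by the reverse reduction, extracting a unique-max coloring from an optimal deterministic algorithm.

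For the upper bound, I would fix a unique-max coloring $c$ of $H$ with $k=\chi_{um}(H)$ colors and let $\algc$ be the algorithm that stabs each newly arriving unstabbed range $r$ by the unique point of $r$ attaining $\cmax(r)$. For a color $\gamma$, let $\sigma(\gamma)$ be the subsequence of ranges that are unstabbed on arrival and get stabbed by a $\gamma$-colored point. The first step is to show these ranges are pairwise disjoint: if $r_1,r_2\in\sigma(\gamma)$ intersect, then $r_1\cup r_2\in R$ because $H$ is $I$-type, and since every point of $r_1\cup r_2$ lies in $r_1$ or $r_2$ we get $\cmax(r_1\cup r_2)=\gamma$; the two stabbing points are distinct (the later range, being unstabbed on arrival, cannot contain the earlier chosen point), so $r_1\cup r_2$ has two $\gamma$-colored maxima, contradicting unique-maximality. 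Disjointness forces $|\opt(\sigma)|\ge|\sigma(\gamma)|$ for every $\gamma$, whereas $|\algc(\sigma)|=\sum_\gamma|\sigma(\gamma)|\le k\cdot\max_\gamma|\sigma(\gamma)|$, and dividing yields $\rho_H(\algc)\le k$.

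For the lower bound I would establish the equivalent statement $\chi_{um}(H)\le\rho(H)+1$. Let $\alg$ be a deterministic algorithm with $\rho_H(\alg)=\rho(H)$. The idea is to run $\alg$ against an adaptive adversary that always presents a decreasing nested chain of ranges, so that $|\opt|=1$ throughout (the innermost range's point stabs all of them) while $\alg$ is compelled to pick a fresh point whenever the current range avoids its previous choices. I would assign to each point $x$ a color equal to the largest depth at which such a nested chain can compel $\alg$ to select $x$, and I would use the $I$-type property to verify that this is a unique-max coloring: if a range $r$ contained two points of equal maximal color, their two witnessing chains could be combined through $r$ and the unions guaranteed by the $I$-type axiom into a single chain forcing a strictly deeper selection, contradicting maximality of the recorded color. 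Because opt stays at $1$ and $\alg$ adds at most one point per depth level, the number of levels is at most $\rho_H(\alg)=\rho(H)$, giving at most $\rho(H)+1$ colors in total.

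The routine part is the upper bound, which is essentially the interval argument with ``interval'' replaced by ``range'' and the $I$-type axiom supplying the union step. The main obstacle is the lower bound: one must cope with the statefulness of the single online run while the nested adversary is defined recursively, and, more delicately, prove that the depth labelling really is unique-max and that its maximum value is bounded by $\rho(H)$. The additive loss of one color appears intrinsic to this reduction --- it reflects the base level for which opt already pays --- and explains why the theorem yields $\chi_{um}(H)-1$ rather than $\chi_{um}(H)$ as the lower bound, the gap being closed to equality only for special hypergraphs such as the intervals of Proposition~\ref{prop:intervals}.
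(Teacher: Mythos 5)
Your upper-bound argument ($\rho(H)\le\chi_{um}(H)$) is correct and is essentially identical to the paper's proof in Section~\ref{sec:ub}: stab each unstabbed range by its unique maximum-colored point, partition the unstabbed ranges by the color of the stabbing point, and use the $I$-type property to show that each color class is pairwise disjoint.

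The lower bound is where your proposal has a genuine gap. The coloring you define --- $c(x)=$ the largest depth at which a nested, always-unstabbed chain can compel $\alg$ to select $x$, read as a unique-\emph{max} coloring --- is not unique-max in general, and the chain-combination step you invoke to prove uniqueness fails. Take $X=\{1,2,3\}$ with all intervals as ranges (the $I$-type hypergraph of Section~\ref{sec:intervals} with $n=3$), and let $\alg$ be $\alg_c$ built from the vertex ranking $(1,2,1)$ of $P_3$; by Proposition~\ref{prop:intervals} this $\alg$ is optimal, i.e.\ $\rho_H(\alg)=\rho(H)=2$. Since point $2$ carries the top rank, $\alg$ selects it the moment any unstabbed range containing it arrives, so $2$ can only be compelled at depth $0$; each of $1$ and $3$ is compelled at depth $1$ (via the chains $([1,3],[1,1])$ and $([1,3],[3,3])$); and no point can be compelled at depth $2$, because three nested ranges, each unstabbed on arrival, would force $|\alg|=3$ against $|\opt|=1$, contradicting $\rho_H(\alg)=2$. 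Your coloring is therefore $(1,0,1)$, and the range $[1,3]$ contains \emph{two} points of maximal color. Moreover, the rescue you propose is impossible here: the two witnessing chains cannot be merged through $[1,3]$, since any range containing both $1$ and $3$ contains $2$ and is hence already stabbed after the first step, and the ``strictly deeper selection'' your argument needs simply does not exist.

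The underlying problem is that your extremal direction is inverted, and a maximum over arbitrary chains carries no uniqueness structure. The paper instead builds one canonical decomposition forest: roots are the $X$-maximal ranges, and the children of a node $v$ are the $S$-maximal ranges for $S=r_v\setminus X(\ppath(v))$, each labeled by the point $\alg$ picks when fed the root-to-node chain. The $I$-type property enters only to show that ranges at incomparable nodes are disjoint (Propositions~\ref{proposition:sibling} and~\ref{proposition:disjoint}), which makes the \emph{shallowest} forest node whose label stabs a given range unique (Claim~\ref{claim f}). Depth is then used as a unique-\emph{min} coloring (Lemma~\ref{lemma:unique min}); points never selected receive the neutral top color $\rho_H(\alg)$ --- a case your proposal also leaves unhandled --- and the unique-max coloring with $\rho_H(\alg)+1$ colors is obtained only at the very end by flipping, $c'(x)=\rho_H(\alg)-c(x)$. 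Note that in the counterexample the same depth numbers $(1,0,1)$ do form a perfectly good unique-min coloring of the intervals; it is only your unique-max reading of them that fails.
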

The proof of Theorem~\ref{thm:main} is by black-box reductions. The
first reduction uses the unique-max coloring to obtain an online
algorithm (simply stab a range with the point with the highest color).
The second reduction uses a deterministic online hitting set algorithm
to obtain a unique-max coloring.

We say that a hypergraph $H=(X,E)$ is \emph{separable} if
$\{x\}\in R$, for every $x\in X$.
The proof of the following corollary appears in Section~\ref{sec:coro proof}.
\begin{corollary}\label{coro:tight}
If a hypergraph $H=(X,R)$ is $I$-type and separable, then $  \rho(H) = \chi_{um} (H)$.
\end{corollary}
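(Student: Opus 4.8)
The plan is to derive Corollary~\ref{coro:tight} from Theorem~\ref{thm:main} by showing that, for a separable $I$-type hypergraph, the lower bound $\rho(H)\geq\chi_{um}(H)-1$ in the theorem can be tightened to $\rho(H)\geq\chi_{um}(H)$. Since the theorem already gives $\rho(H)\leq\chi_{um}(H)$, closing the one-color gap on the lower-bound side yields equality. Thus the entire task reduces to explaining why separability lets us recover the ``missing'' color in the reduction that builds a unique-max coloring from an online algorithm.

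First I would recall the structure of the second black-box reduction announced after the theorem: given a deterministic $k$-competitive online algorithm $\alg$ for $H$, one constructs a unique-max coloring of $H$. The intuition is that the color of a point should record ``how late'' the online algorithm is forced to select it when an adversary keeps shrinking a range around it, mirroring the interval argument in Proposition~\ref{prop:intervals} where the vertex ranking encodes the nesting depth of forced stabs. I expect this reduction to produce a coloring with colors drawn from $\{1,\dots,k+1\}$ in general, because the adversary argument naturally loses one level (the outermost range is stabbed ``for free'' in the sense that it contributes an offset), which is exactly the source of the $+1$ slack and the reason the theorem states only $\rho(H)\geq\chi_{um}(H)-1$.

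The key step is to observe how separability removes this slack. When $\{x\}\in R$ for every $x\in X$, every singleton is itself a range, so the online algorithm can be forced to select each point individually, and the color-assigning scheme no longer needs to reserve an extra color for a degenerate innermost case; equivalently, the adversary's nested sequence can be extended by one more level using singleton ranges, so any online algorithm achieving competitive ratio $k$ must in fact supply enough colors to realize a unique-max coloring with at most $k$ colors rather than $k+1$. Concretely, I would re-run the reduction and check that the base case of the recursive coloring (the deepest ranges) is now handled by singletons rather than requiring a fresh top color, so the produced unique-max coloring uses at most $\rho(H)$ colors, giving $\chi_{um}(H)\leq\rho(H)$.

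The main obstacle will be making the base-case bookkeeping precise: I must verify that in the general (non-separable) reduction the extra color is genuinely attributable to the absence of singleton ranges, and that adding singletons does not introduce new intersecting pairs whose union fails to be a range or otherwise violates the $I$-type hypothesis used to prove the unique-max property. Since singletons $\{x\}$ intersect only ranges containing $x$ and their union with such a range is that range itself (hence already in $R$), the $I$-type closure is preserved, so I anticipate this check to go through cleanly. The concluding sentence would then combine $\chi_{um}(H)\leq\rho(H)$ with the theorem's upper bound $\rho(H)\leq\chi_{um}(H)$ to conclude $\rho(H)=\chi_{um}(H)$.
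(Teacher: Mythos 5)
Your proposal is correct and takes essentially the same route as the paper: the paper also closes the one-color gap by re-examining the algorithm-to-coloring reduction and showing (its Lemma~\ref{lemma:equality}) that separability forces the decomposition to bottom out at singleton ranges, so every point receives a ``real'' color and the spare color reserved for unlabeled points is never used, giving $\chi_{um}(H)\leq\rho(H)$. Combining this with $\rho(H)\leq\chi_{um}(H)$ from Theorem~\ref{thm:main} yields equality, exactly as you conclude.
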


\subsection{Proof of  $\rho(H) \leq \chi_{um} (H)$}\label{sec:ub}
The proof follows the reduction in the proof of
Prop.~\ref{prop:intervals}.  Let $k=\chi_{um}(H)$ and let $c:
X\rightarrow [1,k]$ denote a unique-max coloring
of $H=(X,R)$.  Consider the deterministic hitting-set algorithm
$\alg_c$ defined as follows.  Upon arrival of an unstabbed range $r\in
R$, stab it by the point $x\in r$ colored $\cmax (r)$.

We claim that $\rho_H(\alg_c)\leq k$.  Fix a sequence $\sigma=\{r_i\}_i$ of
ranges input by the adversary.  For a color $\gamma$, let
$\sigma(\gamma)$ denote the subsequence of $\sigma$ that consists of
the ranges $r_i$ in $\sigma$ that satisfy the following properties:
\begin{inparaenum}[(i)]
\item $r_i$ is unstabbed when it arrives.
\item The first point that $\alg_c$ uses to stab $r_i$ is colored
  $\gamma$.
\end{inparaenum}
The ranges in $\sigma(\gamma)$ are pairwise disjoint.  Indeed, if two
ranges $r_1\neq r_2$ in $\sigma(\gamma)$ intersect, then $r_1 \cup
r_2\in R$. Moreover, the maximum color in $r_1 \cup r_2$ is also
$\gamma$. But the color $\gamma$ appears twice in the range $r_1\cup
r_2$; one point that stabs $r_1$ and another point that stabs $r_2$, a
contradiction. Thus the optimum hitting set satisfies $\opt(\sigma)
\geq \max_{\gamma} |\sigma(\gamma)|$.  But
\[
\alg_c (\sigma)
= \sum_{\gamma=1}^{k} |\sigma(\gamma)| \leq k \cdot \max_{\gamma} |\sigma(\gamma)|.
\]
 and hence $\rho_H(\alg_c)\leq k$, as required.

\subsection{Proof of  $\chi_{um} (H)\leq 1+\rho(H)$}
Let $\alg$ denote a deterministic online hitting set algorithm that
satisfies $\rho_H(\alg) = \rho(H)$.  We use $\alg$ as a ``black box''
to compute a unique-min coloring $c:X \rightarrow [0,\rho(H)]$.  Note
that we compute a unique minimum coloring rather than a unique maximum
coloring; this modification simplifies the presentation. (If $c(x)$ is a unique-min coloring, then
$c'(x)\eqdf\rho(H)-c(x)$ is a unique-max coloring.)

\paragraph{Terminology.}
Let $S\subseteq X$ be a subset of points.
We say that a range $r\in R$ is $S$-\emph{maximal} if no range contained in $S$ strictly contains $r$.
Formally, for every range $r'\in R$, $r\subseteq r' \subseteq S$ implies that $r'=r$.
Given a node $v$ in a rooted tree,
let $\ppath(v)$ denote the path from the root to $v$.
Define $\depth(v)$ to be the distance from the root to $v$.
(The distance of the root to itself is zero.)
The \emph{least common ancestor} of two nodes $u$ and $v$ in a rooted
tree is the node of highest depth in $\ppath(u)\cap \ppath(v)$.

\subsubsection{The Decomposition}
We use $\alg$ to construct a decomposition forest consisting of rooted
trees.  Each node $v$ in the forest is labeled by a range $r_v\in R$ and a
point $x_v \in r_v$.
The decomposition forest is defined inductively as follows.

For each $X$-maximal range in $R$ we associate a distinct root.  The
labels of each root $v$ are defined as follows.  The range $r_v$ is the
$X$-maximal range associated with $v$. The point $x_v \in
r_v$ is the point that $\alg$ uses to stab $r_v$ when the input
sequence consists only of $r_v$.

We now describe the induction step for defining the children of a node
$v$ and its labels $r_v$ and $x_v$.  Let $X(\ppath(v))\eqdf \{x_u\mid
u\in\ppath(v)\}$ denote the sequence of points that appear along the
path from the root to $v$.  Similarly, let $\sigma(\ppath(v))$ denote
the sequence of ranges that appear along $\ppath(v)$.  Let $S\eqdf
r_v\setminus X(\ppath(v))$.  For each nonempty $S$-maximal range $r$, we
add a child $v'$ of $v$ that is labeled by the range $r_{v'}=r$.  The
point $x_{v'}$ is the point $x$ that stabs $r_{v'}$ when $\alg$ is
input the sequence of ranges $\sigma(\ppath(v'))$.  We stop with a
leaf $v$ if there is no range contained in $X\setminus X(\ppath(v))$.

\begin{proposition}
  \label{proposition:chain}
  For every node $v$, the sequence of ranges in $\sigma(\ppath(v))$ is
  a strictly decreasing chain. Namely, if $v$ is a child of $u$ then
  $r_v \subsetneq r_u$.  Moreover, when this sequence is input to
  $\alg$, then each range is unstabbed upon arrival.  Hence, the
  points in $X(\ppath(v))$ are distinct.
\end{proposition}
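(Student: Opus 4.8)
The plan is to prove the three assertions together by induction on the depth of $v$, tracking the internal state of $\alg$ as the prefix $\sigma(\ppath(v))$ is fed to it. The central invariant I would maintain is that when $\alg$ is given the sequence $\sigma(\ppath(v)) = (r_{u_0},\ldots,r_{u_d})$, where $u_0$ is the root of $v$'s tree and $u_d=v$, every range is unstabbed upon arrival, so that $\alg$ adds exactly one new point per step and the hitting set held by $\alg$ after processing the $i$-th range is precisely $X(\ppath(u_i)) = \{x_{u_0},\ldots,x_{u_i}\}$. This identification of $\alg$'s state with $X(\ppath(u_i))$ is the workhorse of the whole argument, and it is legitimate because $\alg$ is deterministic and the labels $x_{u_j}$ are defined exactly as the points $\alg$ selects on these prefixes.

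For the strict-containment claim, suppose $v'$ is a child of $v$. By construction $r_{v'}$ is an $S$-maximal range with $S = r_v\setminus X(\ppath(v))$, so $r_{v'}\subseteq S\subseteq r_v$. Since $v\in\ppath(v)$, the point $x_v$ lies in $X(\ppath(v))$, and $x_v\in r_v$ by definition; hence $x_v\in r_v\setminus S$, so $x_v\in r_v$ but $x_v\notin r_{v'}$. This witnesses $r_{v'}\subsetneq r_v$ and establishes that $\sigma(\ppath(v))$ is a strictly decreasing chain.

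For the unstabbed property, I would argue along the path. The root range $r_{u_0}$ is unstabbed since $\alg$ starts with an empty hitting set. Assuming the invariant up to $u_i$, the hitting set that $\alg$ holds before $r_{u_{i+1}}$ arrives is $X(\ppath(u_i))$. By the construction of children, $r_{u_{i+1}}\subseteq S = r_{u_i}\setminus X(\ppath(u_i))$, so $r_{u_{i+1}}$ is disjoint from $X(\ppath(u_i))$ and is therefore unstabbed upon arrival. Consequently $\alg$ adds a single point $x_{u_{i+1}}\in r_{u_{i+1}}$, and since $r_{u_{i+1}}\cap X(\ppath(u_i))=\emptyset$ this point differs from all earlier ones; this simultaneously extends the invariant and yields the distinctness of the points in $X(\ppath(v))$. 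The main obstacle is the bookkeeping behind the first paragraph, namely ensuring that feeding the prefix $\sigma(\ppath(v'))$ reproduces exactly the point choices recorded along the path, which is where the determinism of $\alg$ and the inductive consistency of the prefixes must be invoked carefully.
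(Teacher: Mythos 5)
Your proof is correct and is essentially the argument the paper intends: the paper states Proposition~\ref{proposition:chain} without any proof at all, treating it as immediate from the construction of the decomposition forest (the proof environment that follows it belongs to Proposition~\ref{proposition:sibling}). Your induction along the path --- using determinism and prefix-consistency to identify $\alg$'s hitting set after step $i$ with $X(\ppath(u_i))$, then deducing that $r_{u_{i+1}}\subseteq r_{u_i}\setminus X(\ppath(u_i))$ is unstabbed and yields a fresh point --- is exactly the bookkeeping the paper leaves implicit.
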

\begin{proposition}
  \label{proposition:sibling}
  If $v_1$ and $v_2$ are siblings, then the ranges $r_{v_1}$ and
  $r_{v_2}$ are disjoint.
\end{proposition}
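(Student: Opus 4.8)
The plan is to argue by contradiction, leaning directly on the $I$-type property together with the definition of $S$-maximality used to spawn children in the decomposition. Let $v$ be the common parent of the siblings $v_1$ and $v_2$, and set $S \eqdf r_v \setminus X(\ppath(v))$, exactly the set whose maximal ranges became the children of $v$. By construction, both $r_{v_1}$ and $r_{v_2}$ are nonempty $S$-maximal ranges; in particular $r_{v_1} \subseteq S$ and $r_{v_2} \subseteq S$, and since distinct children carry distinct ranges we have $r_{v_1} \neq r_{v_2}$.

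Suppose toward a contradiction that $r_{v_1} \cap r_{v_2} \neq \emptyset$. Because $H$ is $I$-type, the union $r_{v_1} \cup r_{v_2}$ is again a range in $R$. As both operands lie in $S$, so does their union, i.e.\ $r_{v_1} \cup r_{v_2} \subseteq S$. The one point that needs a word is strictness: I would first observe that two distinct $S$-maximal ranges cannot contain one another, since $r_{v_2} \subseteq r_{v_1} \subseteq S$ together with the $S$-maximality of $r_{v_2}$ would force $r_{v_1} = r_{v_2}$, contradicting $r_{v_1} \neq r_{v_2}$. Hence $r_{v_2} \not\subseteq r_{v_1}$, and therefore $r_{v_1} \subsetneq r_{v_1} \cup r_{v_2}$. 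We have thus exhibited a range in $R$, namely $r_{v_1} \cup r_{v_2}$, that is contained in $S$ and strictly contains $r_{v_1}$, contradicting the $S$-maximality of $r_{v_1}$. Consequently $r_{v_1}$ and $r_{v_2}$ must be disjoint.

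I do not expect any genuine obstacle here; the argument is a single application of the $I$-type closure followed by the maximality contradiction. The only place to be careful is to record that $S$-maximal ranges are, by definition, subsets of $S$, so that the union produced by the $I$-type property does not escape $S$; this is precisely what keeps both the enlarged range and the contradiction inside the same set $S$ that defined the children of $v$.
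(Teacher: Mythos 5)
Your proof is correct and takes essentially the same route as the paper: the $I$-type property makes $r_{v_1}\cup r_{v_2}$ a range contained in $S$, which contradicts the $S$-maximality of the children's ranges. The only difference is that you spell out the strictness step (why the union strictly contains one of the two ranges), which the paper's one-line proof leaves implicit.
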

\begin{proof}
  Otherwise, since $H$ is $I$-type, $r_{v_1}\cup r_{v_2}$ is a range.
  This range contradicts the $S$-maximality of $r_{v_1}$ and $r_{v_2}$
  for $S\eqdf r_v\setminus X(\ppath(v))$, where $v$ is the parent of
  $v_1$ and $v_2$.
\end{proof}
\begin{proposition}
\label{proposition:disjoint} If $v$ and $u$ are two nodes such
  that $v$ is neither an ancestor or a descendant of $u$, then the
  ranges $r_v$ and $r_u$ are disjoint.
\end{proposition}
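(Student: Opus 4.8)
The plan is to reduce the general case of two incomparable nodes to the sibling case already established in Proposition~\ref{proposition:sibling}, using the chain containment of Proposition~\ref{proposition:chain}. The key observation is that any two incomparable nodes ``branch apart'' at some point, and at that branching point they descend from two distinct siblings (or from two distinct roots), whose ranges are already known to be disjoint; containment then propagates disjointness down to $u$ and $v$.

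First I would dispose of the case in which $u$ and $v$ lie in different trees of the decomposition forest. Let $a$ and $b$ be the roots of their respective trees, so that $r_a$ and $r_b$ are distinct $X$-maximal ranges. If they intersected, then since $H$ is $I$-type their union would be a range contained in $X$ that strictly contains each of them, contradicting $X$-maximality; hence $r_a \cap r_b = \emptyset$. By Proposition~\ref{proposition:chain} the ranges weakly decrease from a root to any of its descendants, so $r_u \subseteq r_a$ and $r_v \subseteq r_b$, and therefore $r_u \cap r_v = \emptyset$. For the main case, where $u$ and $v$ share a tree, I would let $w$ be their least common ancestor. Since $v$ is neither an ancestor nor a descendant of $u$, the node $w$ is a proper ancestor of both (it can equal neither $u$ nor $v$). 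Let $w_u$ be the unique child of $w$ lying on $\ppath(u)$ and $w_v$ the unique child of $w$ lying on $\ppath(v)$. Because $\ppath(u)$ and $\ppath(v)$ diverge immediately after $w$, the nodes $w_u$ and $w_v$ are distinct children of $w$, i.e.\ siblings, so Proposition~\ref{proposition:sibling} gives $r_{w_u} \cap r_{w_v} = \emptyset$. Applying the chain containment of Proposition~\ref{proposition:chain} from $w_u$ down to $u$ yields $r_u \subseteq r_{w_u}$, and symmetrically $r_v \subseteq r_{w_v}$; hence $r_u \cap r_v \subseteq r_{w_u} \cap r_{w_v} = \emptyset$, as required.

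The argument is essentially immediate once the right structural objects are named, so I do not expect a serious obstacle. The only points requiring care are verifying that the least common ancestor lies strictly above both $u$ and $v$ (which is exactly what incomparability guarantees) and that the two children $w_u, w_v$ are genuinely distinct (which follows from the fact that $w$ is the \emph{deepest} common ancestor). The degenerate possibility that $w_u = u$ or $w_v = v$ is harmless, since the chain containment $r_u \subseteq r_{w_u}$ then holds with equality. I would also make sure to fold the cross-tree case into the same framework by noting that distinct $X$-maximal ranges play exactly the role of sibling ranges, disjointness following from the identical $I$-type-plus-maximality argument used to prove Proposition~\ref{proposition:sibling}.
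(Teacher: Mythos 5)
Your proof is correct and follows essentially the same route as the paper: reduce to the least common ancestor, note that the two children of the LCA on the respective paths are distinct siblings, apply the sibling-disjointness proposition, and propagate via the chain containment. The only cosmetic difference is that the paper phrases it as a contradiction (assuming a common point $x$, which forces $u$ and $v$ into the same tree since distinct $X$-maximal ranges cannot intersect), whereas you argue directly and treat the cross-tree case as a separate, explicitly verified case --- both are sound.
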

\begin{proof}
  For the sake of contradiction, assume that $x\in r_u\cap r_v$.  It
  follows that $u$ and $v$ must belong to the same tree whose
  root is labeled by the $X$-maximal range that contains $x$.  The least
  common ancestor $w$ of $u$ and $v$ has two distinct children $w_1$
  and $w_2$ such that $w_1\in \ppath(u)$ and $w_2\in \ppath(v)$. By
  Proposition~\ref{proposition:chain}, $r_u \subset r_{w_1}$ and
  $r_v\subset r_{w_2}$. By Proposition~\ref{proposition:sibling},
  $r_{w_1}\cap r_{w_2} = \emptyset$, and it follows that $r_u\cap
  r_v=\emptyset$, as required.
\end{proof}

\noindent
The following proposition is an immediate consequence of Propositions~\ref{proposition:chain} and~\ref{proposition:disjoint}.
\begin{proposition}
\label{proposition:distinct} All the labels $x_v$ of the nodes
  in the forest are distinct.
\end{proposition}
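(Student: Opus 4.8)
The plan is to fix two distinct nodes $u$ and $v$ of the decomposition forest and show that their point-labels differ, i.e.\ $x_u \neq x_v$, by a case analysis on the relative position of $u$ and $v$ in the forest. Any two distinct nodes either stand in an ancestor--descendant relation or they are incomparable, so these two cases are exhaustive, and it suffices to rule out $x_u = x_v$ in each of them separately.

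First I would handle the incomparable case, in which neither of $u,v$ is an ancestor or a descendant of the other. Here Proposition~\ref{proposition:disjoint} applies verbatim and yields $r_u \cap r_v = \emptyset$. Since the construction guarantees $x_u \in r_u$ and $x_v \in r_v$, the disjointness of the two ranges immediately forces $x_u \neq x_v$.

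Next I would treat the comparable case; say $v$ is a descendant of $u$, the symmetric subcase being identical. The key observation is that both nodes then lie on the single root-to-$v$ path, so $u,v \in \ppath(v)$ and hence the labels $x_u$ and $x_v$ both belong to $X(\ppath(v))$. The concluding assertion of Proposition~\ref{proposition:chain} states that the points in $X(\ppath(v))$ are pairwise distinct, so once again $x_u \neq x_v$.

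Combining the two cases gives the claim. I expect no genuine obstacle here, since the statement is an immediate consequence of the two cited propositions; the only points requiring care are verifying that the case split is exhaustive and, in the comparable case, explicitly noting that the ancestor and its descendant both appear on a common path $\ppath(v)$ before invoking the distinctness guaranteed by Proposition~\ref{proposition:chain}.
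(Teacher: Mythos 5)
Your proof is correct and matches the paper's intent exactly: the paper states the proposition as an ``immediate consequence'' of Propositions~\ref{proposition:chain} and~\ref{proposition:disjoint}, and your case split (ancestor--descendant handled by the distinctness of points along $\ppath(v)$ from Proposition~\ref{proposition:chain}, incomparable nodes handled by the disjointness of ranges from Proposition~\ref{proposition:disjoint} together with $x_u \in r_u$, $x_v \in r_v$) is precisely the argument being left implicit there.
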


\subsubsection{Mapping ranges to nodes in the decomposition}
Let $\tilde{X}$ denote the set of nodes in the decomposition forest.
We now define a mapping $f:R \rightarrow \tilde{X}$ from the set of
ranges $R$ to the set of nodes of the decomposition forest.

Define $f(r)$ to be the forest node $v$ of minimum depth such that $x_v$ stabs $r$. Formally,
\begin{align*}
T(r) &\eqdf \{ v\in \tilde{X}\mid  x_v\in r\}\\
f(r)&\eqdf \argmin \{\depth(v) \mid v\in T(r)\}.
\end{align*}

\begin{claim}\label{claim f}
The mapping $f(r)$ is well defined.
\end{claim}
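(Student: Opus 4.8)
To show that $f(r)$ is well defined, the plan is to establish two things: that the set $T(r)$ is nonempty (so that the minimum depth is attained), and that the depth-minimizing node is unique. Both will follow from a single canonical descent through the decomposition forest.

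For nonemptiness, I would perform a descent starting at a root. Since $r$ is a range it is contained in a unique $X$-maximal range $r_{w_0}$ (two intersecting $X$-maximal ranges would have their union a range contained in $X$, violating maximality by the $I$-type property), so $r_{w_0}$ labels a root $w_0$ with $r\subseteq r_{w_0}$. I maintain the invariant that at the current node $v$ we have $r\subseteq r_v$. If $x_v\in r$, then $v\in T(r)$ and we stop. Otherwise $r$ avoids $x_v$, and since every strict ancestor $u$ encountered so far also satisfies $x_u\notin r$, we obtain $r\subseteq r_v\setminus X(\ppath(v))\eqdf S$. Thus $r$ is a nonempty range contained in $S$, so it lies inside a (unique, as sibling ranges are disjoint by Proposition~\ref{proposition:sibling}) $S$-maximal child range, and we descend to that child, preserving the invariant. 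Because ranges strictly decrease along any path (Proposition~\ref{proposition:chain}) the descent terminates, and it cannot terminate at a leaf with $x_v\notin r$, since then $r$ would be a range contained in $X\setminus X(\ppath(v))$, contradicting the leaf stopping condition. Hence the descent reaches a node of $T(r)$.

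For uniqueness, suppose toward contradiction that two distinct nodes $u$ and $w$ both attain the minimum depth $d$ in $T(r)$. Being distinct of equal depth, neither is an ancestor of the other, so they share a least common ancestor $p$ with distinct children $p_1\in\ppath(u)$ and $p_2\in\ppath(w)$. Running the descent above along the common initial segment of the two root-to-node paths, and using that every node of depth strictly less than $d$ lies outside $T(r)$ by minimality of $d$, I obtain $r\subseteq r_p$; since $x_p\notin r$ and $r$ avoids all of $X(\ppath(p))$, in fact $r\subseteq S\eqdf r_p\setminus X(\ppath(p))$. Now $x_u\in r$ lies in $r_u\subseteq r_{p_1}$ and $x_w\in r$ lies in $r_w\subseteq r_{p_2}$ (Proposition~\ref{proposition:chain}), so $r$ meets both $r_{p_1}$ and $r_{p_2}$, which are disjoint (Proposition~\ref{proposition:sibling}). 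Since $H$ is $I$-type, $r\cup r_{p_1}$ is a range contained in $S$, and it strictly contains $r_{p_1}$ because it contains $x_w\notin r_{p_1}$. This contradicts the $S$-maximality of $r_{p_1}$, so the minimizer is unique.

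The main obstacle is the bookkeeping in the descent, namely verifying that the invariant $r\subseteq r_v$ is genuinely preserved: one must check that $r$, being contained in $S$ and merely meeting the relevant child range, is in fact contained in that child's range. This is exactly where the disjointness of sibling ranges and $S$-maximality do the work, and once the descent is set up, both nonemptiness and uniqueness reduce to the same maximality contradiction.
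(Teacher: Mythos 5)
Your proof is correct, and it runs on the same machinery as the paper's: the canonical descent through the decomposition forest, Propositions~\ref{proposition:chain} and~\ref{proposition:sibling}, the $I$-type property, and $S$-maximality. The existence half is the paper's argument in contrapositive form (the paper assumes $T(r)=\emptyset$ and builds an infinite path in a finite tree; you run the same descent and show it must terminate at a node of $T(r)$), so that part is essentially identical. In the uniqueness half you genuinely diverge: the paper establishes $r\subseteq r_u$ and $r\subseteq r_w$, then cites Proposition~\ref{proposition:disjoint} to force an ancestor/descendant relation and contradicts equality of depths, whereas you stop the descent at the least common ancestor $p$ and contradict the $S$-maximality of $r_{p_1}$ directly via the $I$-type union $r\cup r_{p_1}$, which strictly contains $r_{p_1}$ because it contains $x_w$. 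In effect you inline the proof of Proposition~\ref{proposition:disjoint} rather than cite it; a side benefit is that your explicit descent supplies the justification that the paper compresses into the terse step ``by the maximality of $r_u$, it follows that $r\subseteq r_u$''. One point you should spell out: in a forest, the fact that neither of $u,w$ is an ancestor of the other does not by itself guarantee they have a least common ancestor; you also need $u$ and $w$ to lie in the same tree. This is a one-line addition using tools you already set up: $x_u\in r\subseteq r_{w_0}$, while $x_u\in r_u$, which by Proposition~\ref{proposition:chain} is contained in the range labeling the root of the tree containing $u$; since distinct $X$-maximal ranges are disjoint (your own opening observation), that root must be $w_0$, and likewise for $w$.
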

\begin{proof}
We need to prove that (1)~$T(r)$ is not empty for every
  range $r$, and (2)~there exists a unique forest node $v\in T(r)$ of
  minimum depth.

  We prove that $T(r)\neq \emptyset$ by contradiction.  Let $x\in r$
  be any point in $r$. Consider the $X$-maximal range $r_1$ that
  contains $x$.  Let $v_1$ be the root that is associated with $r_1$
  (i.e., $r_{v_1}=r_1$).  Clearly $r\subset r_{v_1}$ because $x\in r$
  and $r_{v_1}$ is $X$-maximal.  By the assumption, $x_{v_1}\notin r$.
  Proceed along the tree rooted at $v_1$ to find a tree path
  $v_1,v_2,\ldots, v_k$ such that $r\subset r_{v_i}$ and
  $x_{v_i}\notin r$ for $1\leq i\leq k$.  To obtain a contradiction,
  we claim that one can find such an infinite path since
  $r\subseteq r_{v_k}\setminus\{x_{v_k}\}$.  Indeed, $r$ is contained
  in one of the $S$-maximal ranges for $S\eqdf r_{v_k}\setminus
  X(\ppath (v_k))$. So we can define $v_{k+1}$ to be the child of
  $v_k$ such that $r\subseteq r_{v_{i+k}}$. Since $T(r)$ is empty
  $x_{v_{k+1}}\not\in r$, the node $v_{k+1}$ meets the requirement
  from the next node in the path.  However, by
  Proposition~\ref{proposition:chain}, each tree in the forest is
  finite, a contradiction.

  We prove that there exists a unique forest node $v\in T(r)$ of
  minimum depth. Assume that there are two forest nodes $u$ and $v$ of
  minimum depth such that both $u$ and $v$ are in $T(r)$. By the
  definition of $S(r)$, $x_u\in r$. By the fact that $\depth(u)$ is
  minimum it follows that $r\cap X(\ppath(u))=\{x_u\}$.  Hence, by the
  maximality of $r_u$, it follows that $r\subseteq r_u$. Analogously,
  $r\subseteq r_v$. Hence, $r_u$ and $r_v$ are not disjoint. By
  Proposition~\ref{proposition:disjoint}, $u$ is an ancestor of $v$,
  or vice-versa. This implies that $\depth(u)\neq \depth(v)$, a
  contradiction.
\end{proof}

\subsubsection{The Coloring}
Define the coloring $c:X \rightarrow \NN$ as follows.
For each $x\in X$, if $x=x_v$ for some forest node $v$, then define $c(x)\eqdf \depth(v)$.
If $x$ does not appear as a label $x_v$ of any node in the forest, then $c(x)\eqdf\rho_H(\alg)$.
Note that Proposition~\ref{proposition:distinct} insures that the coloring $c$ is well defined.

\begin{lemma}\label{lemma:depth}
The depth of every forest node is less than $\rho_H(\alg)$.
\end{lemma}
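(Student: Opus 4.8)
The plan is to turn the path from the root to a node $v$ into a single adversarial input sequence on which $\alg$ behaves badly, and then read off the bound directly from the definition of the competitive ratio. Fix a node $v$ of depth $d\eqdf\depth(v)$, and let $v_0,v_1,\ldots,v_d=v$ be the nodes along $\ppath(v)$, with $v_0$ the root. I would feed $\alg$ the sequence $\sigma\eqdf\sigma(\ppath(v))=(r_{v_0},\ldots,r_{v_d})$ in this order, that is, starting from the $X$-maximal range at the root and descending to $r_v$.

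The first key step is to evaluate $|\alg(\sigma)|$. By Proposition~\ref{proposition:chain} the ranges along the path form a strictly decreasing chain and each one is unstabbed when it arrives; moreover, by the very construction of the decomposition, the labels $x_{v_0},\ldots,x_{v_d}$ are exactly the points $\alg$ uses to stab these ranges when fed this sequence. Hence $\alg$ adds one new point per arrival, and these points are distinct (again by Proposition~\ref{proposition:chain}), so $|\alg(\sigma)|=d+1$.

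The second key step is to bound $|\opt(\sigma)|$. Since the ranges are nested with $r_{v_d}$ the smallest, every range in $\sigma$ contains $r_{v_d}$, so a single point chosen inside $r_{v_d}$ stabs the entire sequence; thus $|\opt(\sigma)|=1$. Combining the two steps with Definition~\ref{def:comp ratio} gives $\rho_H(\alg)\geq |\alg(\sigma)|/|\opt(\sigma)|=d+1$, and therefore $\depth(v)=d\leq \rho_H(\alg)-1<\rho_H(\alg)$, as claimed.

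I do not expect a serious obstacle here, since all the structural work has already been carried out in Proposition~\ref{proposition:chain}. The only points that require care are that the sequence is presented in decreasing order of the ranges, so that the construction indeed guarantees each range is unstabbed on arrival (and that the $x_{v_i}$ are precisely the points $\alg$ selects), and that $|\opt(\sigma)|$ equals $1$ rather than merely being small. Both facts follow immediately from the chain being strictly decreasing, so the argument reduces to the short computation above.
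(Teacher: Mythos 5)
Your proof is correct and follows essentially the same route as the paper's: feed $\alg$ the nested chain $\sigma(\ppath(v))$, observe that it is forced to use $\depth(v)+1$ distinct points while a single point (the paper uses $x_v$ itself) stabs the whole chain, and conclude $\depth(v)+1\leq\rho_H(\alg)$ from the definition of the competitive ratio. No gaps; the structural facts you invoke are exactly those supplied by Proposition~\ref{proposition:chain}, just as in the paper.
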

\begin{proof}
  Consider a node $v$ in the decomposition forest.  By
  Proposition~\ref{proposition:chain}, the sequence $\sigma(\ppath(v))$ of
  ranges is a decreasing chain, and when input to $\alg$, each range
  is unstabbed upon arrival.  Therefore the cardinality of the hitting
  set that $\alg(\sigma(\ppath(v)))$ returns equals $1+\depth(v)$. On
  the other hand, $x_v$ stabs all these ranges. Since the competitive ratio
  of $\alg$ with respect to $H$ is $\rho_H(\alg)$, it follows that the
  length of this sequence is not greater than $\rho_H(\alg)$. The
  length of this sequence equals $1+\depth(v)$, and the
  lemma follows.
\end{proof}
\noindent
Lemma~\ref{lemma:depth} implies that the maximum color assigned by $c(x)$ is $\rho_H(\alg)$.
The following lemma implies that $\chi_{um}(H) \leq \rho_H (\alg)+1$.
\begin{lemma}\label{lemma:unique min}
The coloring $c:X\rightarrow [0,\rho_H(\alg)]$ is a unique-min coloring.
\end{lemma}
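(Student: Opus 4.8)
The plan is to show that for every range $r\in R$ the point $x_v$ with $v=f(r)$ is the unique point of $r$ attaining the minimum color $\cmin(r)$. Recall that $f(r)$ is, by definition, the node of minimum depth in $T(r)=\{u\in\tilde X : x_u\in r\}$, and that Claim~\ref{claim f} guarantees that such a node exists and is unique. Thus $x_v\in r$ and $c(x_v)=\depth(v)$, and the two facts I intend to use as the engine of the argument are Lemma~\ref{lemma:depth} (every forest node has depth strictly less than $\rho_H(\alg)$) together with the well-definedness established in Claim~\ref{claim f}.

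First I would verify minimality, i.e.\ that $c(x_v)=\cmin(r)$. Take any $y\in r$ and split into two cases according to whether $y$ is a label. If $y$ is not the label of any forest node, then $c(y)=\rho_H(\alg)$, and since $\depth(v)<\rho_H(\alg)$ by Lemma~\ref{lemma:depth}, we get $c(y)>c(x_v)$. If instead $y=x_u$ for some node $u$, then $x_u\in r$ means $u\in T(r)$, and the minimality of $\depth(v)$ within $T(r)$ gives $\depth(u)\ge\depth(v)$, i.e.\ $c(y)\ge c(x_v)$. Hence no point of $r$ carries a color smaller than $c(x_v)$, so $c(x_v)=\cmin(r)$.

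Next I would establish uniqueness. Suppose $y\in r$ also satisfies $c(y)=\cmin(r)=\depth(v)$. Since $\depth(v)<\rho_H(\alg)$, the point $y$ cannot be an unlabeled point (those all receive the strictly larger color $\rho_H(\alg)$), so $y=x_u$ for some forest node $u$ with $\depth(u)=\depth(v)$. As $x_u=y\in r$, we have $u\in T(r)$, and therefore $u$ is a node of minimum depth in $T(r)$. The uniqueness clause of Claim~\ref{claim f} then forces $u=v$, whence $y=x_v$. Thus $x_v$ is the unique minimizer, and $c$ is a unique-min coloring, as claimed.

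The argument is short because all the structural work has already been carried out in the decomposition propositions and in Claim~\ref{claim f}; the only point requiring care is making sure the two supporting results are invoked in the right place—Lemma~\ref{lemma:depth} to separate the ``default'' color $\rho_H(\alg)$ from every genuine depth (so that the minimizer is always a labeled point), and the uniqueness part of Claim~\ref{claim f} to rule out a second labeled point of the same minimum depth lying in $r$. Neither is an obstacle since both are already in hand, so I expect the main content to be recognizing that $f(r)$ is exactly the candidate whose color realizes $\cmin(r)$.
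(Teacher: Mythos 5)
Your proof is correct and follows essentially the same route as the paper: the paper's own proof is a one-line appeal to Claim~\ref{claim f}, and your argument simply spells out the details that appeal compresses, namely that $x_{f(r)}$ realizes $\cmin(r)$ (using Lemma~\ref{lemma:depth} to rule out the ``neutral'' color $\rho_H(\alg)$ as a minimum) and that the uniqueness clause of Claim~\ref{claim f} forbids a second point of $r$ with that color. No gaps; your version is just a fully expanded form of the paper's proof.
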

\begin{proof}
  Fix a range $r$.  By Claim~\ref{claim f}, $f(r)$ is well defined.
  Thus $r$ contains only one point colored $c(r)$, and all the other
  points in $r$ are colored by higher colors.
\end{proof}
We remark that the proof of Lemma~\ref{lemma:unique min} uses
the color $\rho_H(\alg)$ as a "neutral" color that is never used as the
minimum color in a range.

\subsection{Proof of Corollary~\ref{coro:tight}}\label{sec:coro proof}
\begin{lemma}\label{lemma:equality}
If $H$ is separable, then every point appears as a label $x_v$ in the decomposition.
\end{lemma}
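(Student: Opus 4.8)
The plan is to exploit separability so that every singleton is a legitimate range, and then invoke the well-definedness of the mapping $f$ established in Claim~\ref{claim f}. First I would fix an arbitrary point $x\in X$ and observe that, since $H$ is separable, the singleton $\{x\}$ belongs to $R$. Thus $\{x\}$ is a bona fide range to which the mapping $f$ applies.

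Next I would apply Claim~\ref{claim f} to the range $r=\{x\}$. That claim asserts that $f(\{x\})$ is well defined, and in particular its proof shows that $T(\{x\})\eqdf\{v\in\tilde{X}\mid x_v\in\{x\}\}$ is nonempty (the uniqueness-of-minimum-depth part of the claim is not even needed here). I would then pick any $v\in T(\{x\})$. By the definition of $T$ we have $x_v\in\{x\}$, and since the only element of the singleton $\{x\}$ is $x$ itself, this forces $x_v=x$. Consequently $x$ occurs as the point label of the forest node $v$, which is exactly the desired conclusion.

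The only genuine content beyond this bookkeeping is recognizing that separability is precisely the hypothesis that turns each $\{x\}$ into a range, so that Claim~\ref{claim f} becomes applicable to it; once that observation is made, there is no real obstacle. I would not expect any case analysis or delicate argument, because the nonemptiness of $T(\{x\})$ guaranteed by Claim~\ref{claim f} immediately produces a node whose label equals $x$. In short, the lemma is a one-line corollary of the fact that $f$ is total on $R$, specialized to singleton ranges.
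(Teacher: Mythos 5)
Your proof is correct, and it takes a different (more modular) route than the paper's. The paper argues directly from the construction of the decomposition forest: under separability, a node $v$ can be a leaf only if $r_v=\{x_v\}$ --- otherwise any remaining point $x\in r_v\setminus X(\ppath(v))$ yields a singleton range $\{x\}$ contained in $S=r_v\setminus X(\ppath(v))$, which forces $v$ to have a child --- and hence, by the finiteness of the trees, no point of any root range is left unlabeled. You instead invoke Claim~\ref{claim f}: separability makes $\{x\}$ a legitimate range, and the non-emptiness of $T(\{x\})$ immediately produces a node $v$ with $x_v\in\{x\}$, i.e.\ $x_v=x$. Your version buys brevity and reuse: the tree induction is not repeated, because it is already contained in the proof that $T(r)\neq\emptyset$ (the ``no infinite path'' argument), and you correctly observe that only the non-emptiness half of Claim~\ref{claim f} is needed, with no circularity since that claim is established independently of this lemma. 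The paper's version buys the structural picture that underlies Corollary~\ref{coro:tight}: with separability the decomposition terminates exactly at singleton ranges, so the labels exhaust $X$. Both arguments bottom out in the same finiteness fact, so the difference is one of packaging rather than substance; given that Claim~\ref{claim f} is already available, yours is arguably the cleaner derivation.
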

\begin{proof}
  If $H$ is separable, then the stopping condition in the construction
  of the decomposition trees is equivalent to $r_v=\{x_v\}$.
  Otherwise, for each each point $x$ in $r_v\setminus \{x_v\}$, the
  range $\{x\}$ excludes the possibility that $v$ is a leaf.  This
  implies that every point appears as a label of a node in the
  decomposition forest, as required.
\end{proof}

\begin{proof}[Proof of Corollary~\ref{coro:tight}]
  A point $x$ is colored $\rho_H(\alg)$ iff no node is labeled by $x$
  in the decomposition forest.  By Lemma~\ref{lemma:equality}, if $H$
  is separable, then every point appears as a label $x_v$ in the
  decomposition.  Thus, the color $\rho_H(\alg)$ is never used by
  $c(x)$. Hence the range of the coloring $c(x)$ is
  $[0,\rho_H(\alg)-1]$ and the number of colors used by $c(x)$ is only
  $\rho_H (\alg)$, as required.
\end{proof}

\section{Online Hitting-Set for Connected Subgraphs}\label{sec:connected}
We consider the following setting of a hypergraph induced by connected
subgraphs of a given graph. Formally, let $G=(V,E)$ be a graph. Let
$H=(V,R)$ denote the hypergraph over the same set of vertices $V$. A
subset $r\subseteq V$ is a hyperedge in $R$ if and only if the
subgraph $G[r]$ induced by $r$ is connected.

\begin{proposition}\label{prop:vr um}
  A coloring $c:V \rightarrow \NN$ is a vertex ranking of $G$ iff it
  is a unique-max coloring of $H$. Hence, $\chi_{um}(H) = \vr(G)$.
\end{proposition}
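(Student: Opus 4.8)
The plan is to prove the ``iff'' by establishing the two implications separately, exploiting the defining feature of $H$: its ranges are exactly the vertex subsets that induce connected subgraphs. Two elementary facts about connectivity will do all the work. First, the vertex set $V(P)$ of any simple path $P$ induces a connected subgraph, hence is a range of $H$. Second, in a connected induced subgraph $G[r]$ any two of its vertices are joined by a simple path all of whose vertices lie in $r$. With these correspondences in hand, each definition translates directly into the other.

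First I would show that every unique-max coloring $c$ of $H$ is a vertex ranking of $G$. Take distinct vertices $x,y$ with $c(x)=c(y)$ and an arbitrary simple path $P$ from $x$ to $y$. Its vertex set $V(P)$ is a range, so by the unique-max property a single vertex of $V(P)$ attains $\cmax(V(P))$. If $\cmax(V(P))$ equalled $c(x)$, then both $x$ and $y$ would attain it, contradicting uniqueness; hence $\cmax(V(P))>c(x)=c(y)$. The unique maximizer $z$ therefore satisfies $c(z)>c(x)$, so $z$ is neither $x$ nor $y$ and is thus an \emph{internal} vertex of $P$ with $c(z)>c(x)$, which is exactly the vertex-ranking condition.

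For the converse, suppose $c$ is a vertex ranking and let $r\in R$, so $G[r]$ is connected. Assume for contradiction that two distinct vertices $x,y\in r$ both attain $\cmax(r)$, i.e. $c(x)=c(y)=\cmax(r)$. Choose a simple path $P$ between $x$ and $y$ inside $G[r]$. By the ranking property there is an internal vertex $z$ of $P$ with $c(z)>c(x)=\cmax(r)$; but $z\in r$, so $c(z)\le\cmax(r)$, a contradiction. Hence the maximum color on every range is attained uniquely, i.e. $c$ is a unique-max coloring of $H$.

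Since the two implications show that the vertex rankings of $G$ and the unique-max colorings of $H$ form literally the same family of colorings, the minimum number of colors in each coincides, giving $\chi_{um}(H)=\vr(G)$. I do not anticipate a genuine obstacle here; the only point deserving a little care is verifying in both directions that the witness vertex $z$ is truly internal rather than an endpoint, and this follows each time from the strict inequality $c(z)>c(x)=c(y)$.
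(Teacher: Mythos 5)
Your proof is correct, and it matches the paper's intent: the paper states Proposition~\ref{prop:vr um} without proof, treating it as an immediate observation, and your two-implication argument (paths induce connected subgraphs, hence ranges; connected induced subgraphs contain paths between any two of their vertices) is precisely the straightforward verification being relied upon. Both directions, including the check that the witness vertex is internal via the strict inequality $c(z) > c(x)$, are handled correctly.
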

In particular, every vertex ranking of the path $P_n$ is a unique-max
coloring of the points with respect to intervals.

The following corollary characterizes the competitive ratio for the
online hitting set problem for $H$ in terms of the vertex ranking
number of $G$. In fact, Propositions~\ref{prop:Pn}
and~\ref{prop:intervals} imply the following corollary for special
case of the path $P_n$.
\begin{corollary}\label{coro:connected}
$\rho(H) = \vr(G).$
\end{corollary}
\begin{proof}
Follows from
Coro.~\ref{coro:tight} and the Proposition~\ref{prop:vr um} .
\end{proof}

Corollary~\ref{coro:connected} implies optimal competitive ratios of
online hitting set algorithms for a wide class of graphs that admit
(hereditary) small balanced separators.  For example, consider the
online hitting set problem for connected subgraphs of a given planar
graph.  Let $G$ be a planar graph on $n$ vertices. It was proved in
\cite{katchalski1995ordered} that $\vr(G)= O(\sqrt{n})$. Therefore,
Coro.~\ref{coro:connected} implies that the competitive ratio of our
algorithm for connected subgraphs of planar graphs is $O(\sqrt{n})$.
Corollary~\ref{coro:connected} also implies that this bound is
optimal.  Indeed, it was proved in \cite{katchalski1995ordered} that
for the $l \times l$ grid graph $G_{l \times l}$ (with $l^2$
vertices), $\vr(G_{l \times l}) \geq l$. Hence, for $G_{l\times l}$,
any deterministic online hitting set algorithm must have a competitive
ratio at least $l$. In Table~\ref{summary} we list several important
classes of such graphs.

\begin{table}\centering
\caption{A list of several graph classes with small separators ($n= |V|$)}
\label{summary}
\begin{tabular}{|l|c|c|}
  \hline
  graph $G=(V,E)$  & competitive ratio & previous result \cite{alon2009online} \\
    \hline   \hline

  path $P_n$ & $\lfloor \log_2 n \rfloor +1$ & $O(\log^2 n)$ \\
    \hline

  tree & $O(\log (\text{diameter}(G))$& $O(n)$ \\  \hline

  tree-width $d$ & $O(d \log n)$ & $O(n)$ \\  \hline

  planar graph & $O(\sqrt{n})$ & $O(n)$ \\
  \hline
\end{tabular}
\end{table}

We note that in the case of a star (i.e., a vertex $v$ with $n-1$
neighbors), the number of subsets of vertices that induce a connected
graph is $2^{n-1}$. Hence, the VC-dimension of the hypergraph is
linear. However, the star has a vertex ranking that uses just two
colors, hence, the competitive ratio of our algorithm in this case is
$2$.  This is an improvement over the analysis of the algorithm of
Alon et al.~\cite{alon2009online} which only proves a competitive
ratio of $O(n)$. Thus, our algorithm is useful even in hypergraphs
whose VC-dimension is unbounded.

\section{Points and Half-Planes}\label{sec:halfplanes}

In this section we consider a special instance of the online
hitting set problem for a finite set of points in the plane and
ranges induced by half-planes.

We prove the following results for hypergraphs in which the ground set
$X$ is a finite set of $n$ points in $\RR^2$ and the ranges are all
subsets of $X$ that can be cut off by a half-plane.  Namely,
a subset of points that lie above (respectively, below) a given line $\ell$.

We note that the hypergraph of points and half-planes is not $I$-type.
See Figure~\ref{fig:pts-planes} for an example. Thus,
Theorem~\ref{thm:main} is not immediately applicable.

\begin{figure}
  \centering
  \includegraphics{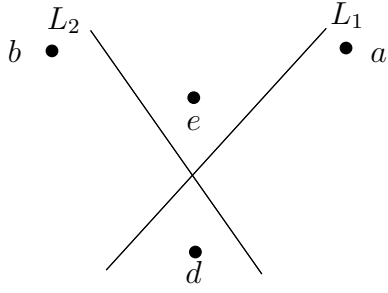}
  \caption{Two intersecting half-planes, the union of which is not
    induced by a half-plane.  The half-planes $r_1=\{a,d\}$ and
    $r_2=\{b,d\}$ intersect. By convexity, every half-plane that contains
    $a,b,$ and $d$ must contain $e$.}
  \label{fig:pts-planes}
\end{figure}
\begin{theorem}\label{thm:LB}
The competitive ratio of every online hitting set algorithm for
points and half-planes is $\Omega(\log n)$.
\end{theorem}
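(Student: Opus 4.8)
The competitive ratio of every online hitting set algorithm for points and half-planes is $\Omega(\log n)$.

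Let me think about how to prove this lower bound.

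First, I recall the structure of the earlier interval lower bound (Proposition~\ref{prop:intervals}). There, the adversary started with all points as one range, then repeatedly fed a sub-interval within the previous range that avoids the already-chosen hitting points, each time forcing a new point while OPT stays at 1. The key was: a nested decreasing chain of ranges, each large enough that after removing the chosen points, a big sub-range survives, and the whole chain has a common point (so OPT = 1) while ALG is forced to use $\Omega(\log n)$ points.

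The natural strategy here: **embed the interval instance into the points-and-half-planes setting.** If I can place $n$ points in the plane (say, on a convex curve like a parabola or a circular arc) so that the subsets cut off by half-planes include all "contiguous arcs" of the point set, then the half-plane ranges restricted to contiguous arcs behave exactly like intervals on a line. Points in convex position have the property that any set of consecutive points along the convex hull can be cut off by a half-plane. So I would place the $n$ points on a convex arc, identify each point with a position $1,\dots,n$, and observe that every interval $[i,j]$ of consecutive points is realizable as a half-plane range.

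**The plan.** First, I would fix $n$ points in convex position (e.g., on the lower arc of a parabola) and note that for any $1 \le i \le j \le n$, there is a half-plane whose intersection with $X$ is exactly $\{p_i, p_{i+1}, \dots, p_j\}$. Second, I would run exactly the interval adversary: start with $r_1 = X$ (all points, a valid half-plane range), and given the algorithm's current hitting set $C_i$, since the current range $r_i$ (a contiguous arc) minus the finitely many chosen points still contains a contiguous sub-arc of length at least $\lceil (|r_i|-1)/2 \rceil$, feed that sub-arc as $r_{i+1}$. This forces the algorithm to pick a fresh point at each of the $\lfloor \log_2 n\rfloor + 1$ steps. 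Third, since $r_1 \supset r_2 \supset \cdots$ is a decreasing chain, all ranges share a common point, so $|\opt(\sigma_i)| = 1$ throughout, giving competitive ratio $\ge \lfloor \log_2 n\rfloor + 1 = \Omega(\log n)$.

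**The main obstacle.** The crux is the geometric realizability claim: I must verify that after removing an \emph{arbitrary} subset of already-chosen points from a contiguous arc, a large contiguous sub-arc remains that is itself cuttable by a half-plane. The set-difference $r_i \setminus C_i$ need not be contiguous (the adversary's previous picks may lie in the middle of the arc), but by a pigeonhole/counting argument the longest contiguous gap between consecutive removed points has length at least $(|r_i| - |C_i \cap r_i|)/(|C_i\cap r_i|+1)$; I need to confirm this is $\ge (|r_i|-1)/2$ under the invariant that at step $i$ only $i-1 = O(\log n)$ points have been removed so far. Actually the cleaner route, matching the interval proof, is to maintain that at each step exactly one new point lies inside the current arc, so removing it splits the arc into two sub-arcs and the larger has length $\ge \lceil(|r_i|-1)/2\rceil$; I would argue the invariant that the chosen points outside the current arc never re-enter, which holds because the arcs are nested. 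The only genuinely new content beyond Proposition~\ref{prop:intervals} is the convex-position realizability lemma; everything else transfers verbatim, so I expect the proof to be short once that lemma is stated.
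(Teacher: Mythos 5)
Your proposal is correct and takes essentially the same approach as the paper: the paper also places the $n$ points in convex position (on the parabola $y=x^2$), observes that every interval $[i,j]$ of consecutive points is cut off by the half-plane below the line through the images of $i$ and $j$, and then invokes the lower bound for intervals (Proposition~\ref{prop:intervals}). The ``obstacle'' you worry about is not really one, since the adversary only ever presents contiguous arcs---namely the largest contiguous sub-arc of the current range avoiding the chosen points---which is exactly the interval adversary transferred verbatim.
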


\begin{theorem}\label{thm:UB-points}
There exists an online hitting set algorithm for points and
half-planes that achieves a competitive ratio of $O(\log n)$.
\end{theorem}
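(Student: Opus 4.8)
The plan is to design the algorithm via $\varepsilon$-nets rather than Theorem~\ref{thm:main}, since the point/half-plane hypergraph is not $I$-type (Figure~\ref{fig:pts-planes}) and the main theorem does not apply. First I would reduce to a single orientation: every range is either an ``above'' range or a ``below'' range, and by running two independent copies of the algorithm (one for each type) the competitive ratio only doubles. So it suffices to handle, say, the ranges cut off from below. The key geometric input I would invoke is that half-planes admit $\varepsilon$-nets of size $O(1/\varepsilon)$ with respect to any weighting of the points (a classical fact, tight for this range space). This linear net size, as opposed to the generic $O(\tfrac1\varepsilon \log \tfrac1\varepsilon)$, is exactly what will let us shave a logarithmic factor off the $O(\log^2 n)$ bound of Alon et al.~\cite{alon2009online}.

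The algorithm I would analyze proceeds in phases indexed by a doubling guess $k = 1, 2, 4, \ldots$ for $|\opt(\sigma)|$. Within the phase with guess $k$, I maintain a weight $w(x) \geq 1$ on each point, initialized to $1$ so that the total weight starts at $W_0 = n$, and I keep the current hitting set built around a $\Theta(1/k)$-net of the weighted point set. When an unstabbed range $r$ arrives, the net property forces $r$ to be \emph{light}, i.e.\ $w(r) < \Theta(1/k)\cdot W$; I then stab $r$ and double the weights of all points of $r$. Each such event increases the total weight by at most a factor $(1 + \Theta(1/k))$, so after $m$ events $W \leq n\,(1+\Theta(1/k))^{m}$. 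On the other hand, under the working assumption $|\opt|\le k$, some fixed optimum point lies in at least an $m/k$ fraction of the doubled ranges, so its weight alone forces $W \geq 2^{m/k}$. Combining the two estimates yields $m = O(k \log n)$, bounding the number of unstabbed-range events in the phase. A phase ends (and $k$ doubles) once this budget is exceeded, which certifies $|\opt| > k/2$; hence at the end $|\opt| \geq k/2$ for the final guess, and the total number of points the algorithm commits is $O(k \log n)$, giving $\rho_H(\alg) = O(\log n)$ as required.

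The step I expect to be the main obstacle is reconciling the net-maintenance with the \emph{add-only} constraint of the online model: the weight-doubling argument wants the current hitting set to always be a valid $\varepsilon$-net of the \emph{current} weights, yet re-establishing a net after each weight change must never remove previously chosen points and must not add too many new ones (naively recomputing a fresh $\Theta(k)$-size net at every event would inflate the cost by a factor of $k$ and destroy the bound). The delicate part of the proof is therefore the accounting that charges the committed points to the $O(k\log n)$ events while respecting monotonicity, so that the per-phase point budget stays $O(k\log n)$ rather than $O(k^2\log n)$. A secondary technical point is verifying that the weighted $O(1/\varepsilon)$-net guarantee for half-planes holds uniformly across phases and under the multiplicative weights produced by doubling; once that is in place, the two weight estimates above close the argument.
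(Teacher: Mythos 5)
Your proposal leaves its central step unproved, and that step is exactly where the approach breaks. The weight-doubling analysis needs the invariant that every unstabbed range $r$ that arrives is light, i.e.\ $w(r) < \Theta(1/k)\,W$; this holds only if your committed hitting set contains a $\Theta(1/k)$-net \emph{with respect to the current weights} at every moment. But the weights change after every event, and the online model only lets you add points. You name this dilemma yourself and then defer it (``the delicate part of the proof is therefore the accounting\ldots''), but no accounting is given, and the obvious relaxations fail. Recomputing a fresh $\Theta(k)$-size net per event costs $\Theta(k^2\log n)$ points, as you note. Recomputing only when the total weight doubles does not help either: between recomputations a single range can climb from weight below $\varepsilon W_t$ to weight $\approx W_t$ (all of the total increase can be concentrated inside one range), so a stale net gives no lightness guarantee whatsoever, and then both weight estimates --- the upper bound $W \leq n(1+\Theta(1/k))^m$ and the lower bound $W \geq 2^{m/k}$ --- lose their force. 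So what you have is a program whose hardest ingredient, monotone online maintenance of a weighted $\varepsilon$-net with $O(1)$ amortized additions per update, is not a classical fact but an unresolved difficulty; the weighted $O(1/\varepsilon)$-net bound for half-planes that you invoke is fine, but it is an offline statement and does not supply this.

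The paper takes a different and entirely monotone route that you may want to compare against. It observes that the extreme points $V\subseteq X$ (the lower envelope of the convex hull), ordered by $x$-coordinate, form a path $P$ such that every half-plane meets $V$ in a contiguous interval $I_i = r_i\cap V$. It fixes once and for all a vertex ranking of $P$ with $\lfloor \log_2 |V|\rfloor+1$ colors and, on an unstabbed range, stabs the maximum-color point of $I_i$ (Algorithm \algp). The crux is Lemma~\ref{lem:LB-algp}: two unstabbed ranges stabbed with the same color are disjoint \emph{as subsets of $X$}, not merely as intervals in $V$; a point $z\in r_i\cap r_j$ outside $V$ would force the line defining $r_j$ and the line isolating an in-between extreme point $t$ to cross twice. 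This yields $|\opt(\sigma)| \geq \max_a |\sigma(a)|$ directly, with no reweighting, no phases, and no net maintenance, so the add-only constraint never becomes an issue.
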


In the proofs we consider only ranges of points that are below a line;
the case of points above a line is dealt with separately. This
increases the competitive ratio by at most a factor of two.

\paragraph{Notation.}
Given a finite planar set of points $X$, let $V\subseteq X$
denote the subset of extreme points of $X$. That is, $V$ consists
of all points $p \in X$ such that there exists a half-plane $h$
with $h \cap X = \{p\}$. Let $\{p_i\}_{i=1}^{|V|}$ denote an
ordering of $V$ in ascending $x$-coordinate order. Let
$P=(V,E_P)$ denote the path graph over $V$ where $p_i$ is a
neighbor of $p_{i+1}$ for $i = 1,\ldots, |V|-1$. The intersection
of every half-plane with $V$ is a subpath of $P$. Namely, the
intersection of a nonempty range $r_i$ with $V$ is a set of the
form $\{p_j \mid j\in[a_i,b_i]\}$. We refer to such an
intersection as a discrete interval (or simply an interval, if
the context is clear). We often abuse this notation and refer to
a point $p_i\in V$ simply by its index $i$. Thus, the interval of
points in the intersection of $r_i$ and $V$ is denoted by $I_i
\eqdf [a_i,b_i]$.

\subsection{Proof of Theorem~\ref{thm:LB}}
We reduce the instance of intervals on a line (or equivalently, the
path $P_n$ and its induced connected subgraphs) to an instance of
points and half-planes.  Simply place the $n$ points on the parabola
$y=x^2$. Namely, point $i$ is mapped to the point $(i, i^2)$.  An
interval $[i,j]$ of vertices is obtained by points below the line
passing through the images of $i$ and $j$.  Hence, the problem of
online hitting ranges induced by half-planes is not easier than the
problem of online hitting intervals of $P_n$.  The theorem follows
from Proposition~\ref{prop:intervals}.

\subsection{Proof of Theorem~\ref{thm:UB-points}}
\paragraph{Algorithm Description.}
The algorithm reduces the minimum hitting set problem for points and
half-planes to a minimum hitting set of intervals in a path.  The
reduction is to the path graph $P$ over the extreme points $V$ of $X$.
To apply Algorithm \algc (see Sec.~\ref{sec:ub}), a vertex ranking $c$
for $P$ is computed, and each half-plane $r_i$ is reduced to the
interval $I_i$. A listing of Algorithm \algp\ appears as
Algorithm~\ref{alg:points}. Note that the algorithm \algp\ uses only
the subset $V\subset X$ of extreme points of $X$.

\begin{algorithm}
  \caption{\algp$(\{r_i\})$ - an online hitting set for points and half-planes}
\label{alg:points}
\begin{algorithmic}[1]
  \REQUIRE $X\subset \RR^2$ is a set of $n$ points, and each $r_i$ is
  an intersection of $X$ with a half-plane.

\STATE $V\gets$ the extreme points of $X$ (i.e., lower envelope of the convex hull).

\STATE $\{p_i\}_{i=1}^{|V|}\gets$ ordering of $V$ in ascending $x$-coordinate
order.

\STATE Let $P=(V,E_P)$ denote the path graph over $V$, where $E_P\eqdf
\{(p_i,p_{i+1})\}_{i=1}^{|V|-1}$.
\STATE $c \gets$ a vertex ranking of $P$ (with $\lfloor \log_2 |V|
\rfloor +1$ colors).
\STATE Upon arrival of range $r_i$, reduce it to the interval
$I_i=r_i\cap V$.

\STATE Run $\algc$ with the sequence of ranges $\{I_i\}_i$.
\end{algorithmic}
\end{algorithm}

\paragraph{Analysis of the Competitive Ratio.}
The analysis follows the proof of Proposition~\ref{prop:intervals}.
Recall that $\sigma(a)$ denotes the subsequence of $\sigma$ consisting
of ranges $r_i$ that are unstabbed upon arrival and stabbed initially by a point
colored $a$.

\begin{lemma}\label{lem:LB-algp}
 The ranges in $\sigma(a)$ are pairwise disjoint.
\end{lemma}

\begin{proof}
  Assume for the sake of contradiction that $r_i,r_j\in\sigma(a)$ and
  $z\in r_i\cap r_j$.  Let $[a_i,b_i]$ denote the endpoints of the
  interval $I_i=r_i\cap V$, and define $[a_j,b_j]$ and $I_j$
  similarly.  The proof of  Proposition~\ref{prop:intervals} proves that
  $I_i\cup I_j$ is not an interval.  This implies that $z\not\in V$ and
  that there is an extreme point $t\in V$ between $I_i$ and $I_j$.

  Without loss of generality, $b_i < t < a_j$. Let $(z)_x$ denote the
  $x$-coordinate of point $z$.  Assume that $(z)_x\leq (t)_x$ (the
  other case is handled similarly).  See Fig.~\ref{fig:points} for
  an illustration. Let $L_j$ denote a line that induces the range $r_j$,
  i.e., the set of points below $L_j$ is $r_j$.  Let $L_t$ denote a
  line that separates $t$ from $X\setminus \{t\}$, i.e., $t$ is the
  only point below $L_t$.  Then, $L_t$ passes below $z$, above $t$,
  and below $a_j$.  On the other hand, $L_j$ passes above $z$, below
  $t$, and above $a_j$.  Since $(z)_x \leq (t)_x < (a_j)_x$, it
  follows that the lines $L_t$ and $L_j$ intersect twice, a
  contradiction, and the lemma follows.
\end{proof}
\begin{figure}
\centering
\includegraphics[width=0.6\textwidth]{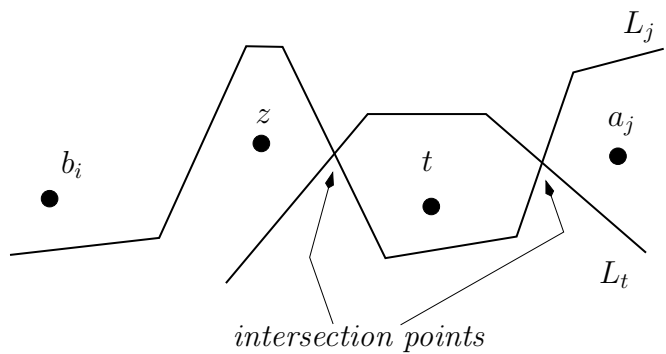}
\caption{Proof of Lemma~\ref{lem:LB-algp}. The lines $L_j$ and $L_t$  are depicted
  as  polylines only for the purpose of depicting their above/below
  relations with the points.}
 \label{fig:points}
 \end{figure}

Lemma~\ref{lem:LB-algp} implies that $|\opt(\sigma)| \geq \max_{a}
|\sigma(a)|$.  On the other hand $|\algp(\sigma)| = \sum_a |\sigma(a)|
\leq (1+\log n)\cdot \max_{a} |\sigma(a)|$, and
Theorem~\ref{thm:UB-points} follows.

\section{Points and Unit Discs}\label{sec:discs}
In this section we consider a special instance of the online hitting
set problem in which the ground set $X$ is a finite set of $n$ points
in $\RR^2$.  The ranges are subsets of points that are contained in a
unit disc.  Formally, a unit disc $d$ centered at $o$ is the set
$d\eqdf\{x\in \RR^2 : ||x-o||_2 \leq 1\}$.  The range $r(d)$ induced
by a disc $d$ is the set $r(d)\eqdf \{x \in X : x\in d\}$.  The circle
$\partial d$ is defined by $\partial d\eqdf\{x\in \RR^2 : ||x-o||_2 = 1\}$.

As in the case of points and half-planes, the hypergraph of points and
unit discs is not $I$-type. To see this, assume that the distances between
the four points in Fig.~\ref{fig:pts-planes} are small. In this case,
the lines $L_1$ and $L_2$ can be replaced by unit discs that induce
the same ranges.

\begin{theorem}\label{thm:LB-discs}
The competitive ratio of every online hitting set algorithm for
points and unit discs is $\Omega(\log n)$.
\end{theorem}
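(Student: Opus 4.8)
The plan is to mimic the lower bound for points and half-planes (Theorem~\ref{thm:LB}) by reducing the interval instance on $P_n$ to points and unit discs. Recall that Theorem~\ref{thm:LB} placed the $n$ points on the convex parabola $y=x^2$ so that each interval $[i,j]$ of $P_n$ coincides with the set of points cut off by a suitable line. The key observation, already noted in the text for the four-point configuration of Fig.~\ref{fig:pts-planes}, is that over a sufficiently small region a unit circle is almost indistinguishable from its tangent line. So my first step is to take the parabola construction and scale it down by a factor $\epsilon$, mapping point $i$ to $(\epsilon i,\epsilon i^2)$; for $\epsilon$ small enough all $n$ points lie inside a disc of diameter much smaller than $1$.

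Next I would prove the geometric replacement lemma: for each interval $[i,j]$, the line $\ell_{ij}$ that realizes this interval in the half-plane construction can be replaced by a unit disc $D_{ij}$ that induces exactly the same range $[i,j]$ on the scaled point set. The point is that a unit circle deviates from its tangent line by only $O(\delta^2)$ over a neighborhood of diameter $\delta$; after scaling, every point has a strictly positive margin from each relevant line (no point lies on $\ell_{ij}$ other than the two intended endpoints, which we first move to the interior side by a slight tilt of $\ell_{ij}$). Choosing $\epsilon$ so that all these margins dominate the $O(\delta^2)$ deviation, the disc $D_{ij}$ whose boundary closely follows $\ell_{ij}$ through the tiny region, with center on the side below $\ell_{ij}$, contains exactly the points below $\ell_{ij}$, i.e.\ exactly the interval $[i,j]$. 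Since the disc has radius $1$ while the whole configuration has diameter much smaller than $1$, only the local separation matters and the far-away curvature of $\partial D_{ij}$ is irrelevant.

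Having established that every interval of $P_n$ is a unit-disc range on this configuration, the adversary of Proposition~\ref{prop:intervals} transfers verbatim. The adversary first presents $r_1=X$ (realizable since the scaled configuration fits inside a unit disc), and then at each step presents the larger half of $r_i\setminus C_i$, forcing the algorithm to spend a fresh point on each of the $\lfloor\log_2 n\rfloor+1$ nested intervals while $|\opt(\sigma_i)|=1$. This yields a competitive ratio of at least $\lfloor\log_2 n\rfloor+1=\Omega(\log n)$ for every deterministic online algorithm, as claimed.

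I expect the replacement lemma to be the main obstacle: one must verify quantitatively that a single unit disc can reproduce a line-induced range on the whole point set, i.e.\ that the $O(\delta^2)$ bending of the unit circle can be made smaller than every point-to-line gap simultaneously. An alternative that sidesteps the curvature estimate is to place the points directly on a tiny circular arc of radius $\rho<1$ spanning an angle less than $\pi$: there the intersection of any unit disc with the point set is automatically a contiguous sub-arc, i.e.\ an interval, and every interval (including the full arc) is realizable, again reducing to Proposition~\ref{prop:intervals}.
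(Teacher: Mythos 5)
Your proposal is correct, and its overall strategy coincides with the paper's: realize every interval of a linear order as a unit-disc range, then run the adversary of Proposition~\ref{prop:intervals} to force $\lfloor\log_2 n\rfloor+1$ points against an optimum of $1$. The difference is in the construction, and here the paper is much lighter than your primary route: it simply places the $n$ points \emph{collinearly}, with the distance between the first and last point less than one. Since a unit disc meets a line in a chord (of length up to $2$), every sub-interval of such a short segment is cut off exactly by some unit disc, so no curvature-versus-margin estimate is needed at all. Your scaled-parabola route instead hinges on the quantitative ``replacement lemma'' that you yourself flag as the main obstacle and only sketch (tilt each line $\ell_{ij}$ so the endpoints gain positive margin, then check that the $O(\delta^2)$ sag of a unit circle stays below every point-to-line margin). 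This can be made rigorous --- after scaling by $\epsilon$ the margins shrink linearly in $\epsilon$ while the sag shrinks like $\epsilon^2$ times a polynomial in $n$, so a polynomially small $\epsilon$ suffices --- but it is extra work that inherits the parabola from Theorem~\ref{thm:LB} for no benefit, since unit discs, unlike half-planes, can already carve arbitrary intervals out of collinear points. Your alternative in the closing paragraph (points on an arc of a circle of radius $\rho<1$ spanning less than $\pi$) is the cleanest of your two routes and is essentially the paper's construction with the line replaced by a circle: two distinct circles cross in at most two points, so a unit disc meets the $\rho$-circle in a single arc, and for any two cut points the unit circle through them whose center lies on the minor-arc side contains that minor arc (as $\rho<1$); this version is complete as stated and needs no perturbation argument.
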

\begin{proof}
  Reduce an instance of intervals on a line to points and half-planes
  as follows.  Position $n$ points on a line such that the distance
  between the first and last point is less than one. For each
  interval, there exists a unit disc that intersects the points in
  exactly the same points as the interval. Thus, the lower bound for
  points and intervals (Prop.~\ref{prop:intervals}) holds also for
  unit discs.
\end{proof}

\begin{theorem}\label{thm:UB-discs}
There exists an online hitting set algorithm for points and
discs that achieves a competitive ratio of $O(\log n)$.
\end{theorem}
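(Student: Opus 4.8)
The plan is to mirror the proof of Theorem~\ref{thm:UB-points} for half-planes: build an $O(\log n)$-coloring $c$ of (a relevant subset of) the points, run the greedy rule of Algorithm $\algc$ that stabs each newly arriving unstabbed range by the point of highest color inside it, and bound the competitive ratio by establishing the analogue of Lemma~\ref{lem:LB-algp}, namely that for every color $a$ the subsequence $\sigma(a)$ of ranges first stabbed by a point colored $a$ is pairwise disjoint. Once such a coloring and disjointness statement are in hand, the counting argument is identical to the half-plane case: $|\opt(\sigma)|\ge\max_a|\sigma(a)|$ while $|\alg(\sigma)|=\sum_a|\sigma(a)|\le(1+\text{(number of colors)})\cdot\max_a|\sigma(a)|$, giving competitive ratio $O(\log n)$. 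As a first reduction I would split the unit discs into those ``arriving from above'' (whose relevant bounding arc is the lower arc of the circle) and those ``from below,'' handle each family separately with its own coloring, and pay only a factor of two, exactly as the half-plane proof restricts to ranges below a line.

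The geometric heart of the half-plane proof is that two distinct lines cross at most once, which is what forces the contradiction in Lemma~\ref{lem:LB-algp}. The congruent-disc analogue I would exploit is that two distinct unit circles cross in at most two points, together with the fact that a circle $L_t$ isolating a single point $t$ and a circle $L_j$ defining a range $r_j$ are both unit circles. Concretely, if $r_i,r_j\in\sigma(a)$ intersect in a point $z$ while their footprints on the underlying monotone structure are separated by an isolated point $t$, then tracing the above/below incidences of $z$, $t$, and an endpoint of $r_j$ against $L_t$ and $L_j$ should force these two unit circles to alternate sides at least three times, i.e.\ to cross at least three times, which is impossible. This is the step where ``at most two'' replaces ``at most one,'' and where the restriction to one-sided discs (so that each relevant arc is monotone) is used.

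The main obstacle is that unit discs are not $I$-type, so Theorem~\ref{thm:main} does not apply, and, unlike half-planes, a nonempty unit-disc range need not contain any point that can be isolated by a unit disc. Hence the clean reduction to the extreme points $V$ and a single path $P$ (as in Algorithm~\ref{alg:points}) does not transfer, and the real work is to produce an $O(\log n)$-coloring for which both (i)~every disc range has a well-defined highest-colored point that the algorithm can select, and (ii)~the per-color disjointness of $\sigma(a)$ holds. I expect the correct such coloring to come from a balanced recursive partition of the point set, the two-dimensional analogue of the halving that yields $\vr(P_n)=\lfloor\log_2 n\rfloor+1$ in Proposition~\ref{prop:intervals}, assigning the highest remaining color to a thin separator that every disc crosses in a controlled way; verifying that this separator interacts with congruent discs so that the crossing argument goes through is the delicate part. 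Finally, optimality is immediate, since the matching $\Omega(\log n)$ lower bound is already supplied by Theorem~\ref{thm:LB-discs}.
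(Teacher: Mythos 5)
Your proposal correctly identifies the obstacle (unit discs are not $I$-type, and the extreme-point/path reduction for half-planes does not transfer), but the two steps you offer to overcome it both have genuine gaps. First, the crossing argument as you sketch it does not yield a contradiction. In the half-plane proof, the three test points ($z$, $t$, and $a_j$) force \emph{two} sign changes in the above/below pattern of $L_t$ versus $L_j$, i.e.\ two crossings, which contradicts the fact that two lines cross at most once. For unit circles, two crossings are perfectly legal, so the same three test points prove nothing; to force \emph{three} crossings you would need a fourth alternation point, and nothing in your setup produces one. The paper resolves this not by tolerating two crossings but by restoring the ``at most one crossing'' property locally: it tiles the plane into $1/2\times 1/2$ squares, shows that every unit disc meeting a tile $s$ contains one of four fixed centers $o^1,\dots,o^4$ (Proposition~\ref{tile}), classifies discs by which center they contain, and invokes Lemma~\ref{lem:CKS} (from Chen et al.): two circles of the same type cross at most \emph{once} inside the cone $K^{s,\tau}$. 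Only then does the half-plane-style three-point argument go through (Lemma~\ref{lem:distinct}).

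Second, the coloring you defer to ``a balanced recursive partition with a thin separator'' is exactly the unproved core, and the paper's construction is quite different from what you envision: within each tile $s$ and type $\tau$ it defines extreme points $V_{s,\tau}$ (points isolable \emph{by a disc of that type within that tile}), orders them by angle around $o^\tau$ so that same-type disc ranges become intervals (Proposition~\ref{prop:angle}), and takes a vertex ranking of the resulting path --- there is no single global coloring, no global ``from above/from below'' split (which is not meaningful for closed curves), and no single stabbing point per range. Consequently the counting argument also changes: the algorithm adds up to $16$ points per unstabbed range (one per intersected tile), global per-color disjointness of $\sigma(a)$ is false and is never used; instead the paper proves a local bound $|\algd(\sigma(x))| = O(\log n)$ for each fixed point $x$ (Lemma~\ref{lem:local}) and charges every unstabbed arrival to some $x\in\opt(\sigma)$, giving $|\algd(\sigma)| \leq \sum_{x\in\opt(\sigma)}|\algd(\sigma(x))| = O(\log n)\cdot|\opt(\sigma)|$. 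So your plan is missing both the localization idea that makes the geometry work and the modified charging scheme that replaces per-color disjointness.
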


\subsection{Proof of Theorem~\ref{thm:UB-discs}}

\paragraph{Partitioning.}
We follow Chen et al.~\cite{chen2009online} with the following
partitioning of the plane (see Fig.~\ref{fig:tile}).  Partition
the plane into square tiles with side-lengths $1/2$.  Consider a
square $s$ in this tiling.  Let $S$ denote a square concentric
with $s$ whose side length is $5/2$.  Partition $S$ into four
quadrants, each a square with side length $5/4$. Let $S^i$ denote
a quadrant of $S$ and let $o^i$ denote its center, for $i \in
\{1,2,3,4\}$.
Let $D_s$ denote the set of unit discs $d$ such that $d\cap s \neq \emptyset$.

\begin{proposition}\label{tile}
  If $d\in D_s$, then $d\cap \{o^1,\ldots,o^4\} \neq \emptyset$.
\end{proposition}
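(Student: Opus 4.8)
The plan is to introduce coordinates centered at the common center of $s$ and $S$, so that $s = [-1/4,\,1/4]^2$ and the four quadrant centers become $o^i = (\pm 5/8,\, \pm 5/8)$. Writing $o$ for the center of the unit disc $d$, the hypothesis $d \in D_s$ (i.e.\ $d\cap s\neq\emptyset$) is equivalent to saying that $o$ lies within Euclidean distance $1$ of the square $s$, while the desired conclusion $d\cap\{o^1,\ldots,o^4\}\neq\emptyset$ is equivalent to $\|o-o^i\|\le 1$ for some $i$. Thus the whole statement reduces to the purely metric claim: every point $o$ with $\mathrm{dist}(o,s)\le 1$ is within distance $1$ of one of the four points $(\pm 5/8,\pm 5/8)$.

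Next I would exploit the symmetry of the configuration. Both $s$ and the set $\{o^1,\ldots,o^4\}$ are invariant under the reflections $x\mapsto -x$ and $y\mapsto -y$, so without loss of generality I may assume $o$ lies in the closed first quadrant, $o_x,o_y\ge 0$, and attempt to hit it with $o^1=(5/8,5/8)$. For such $o$ the nearest point of $s$ is obtained by clamping each coordinate to $[-1/4,1/4]$, so
\[
\mathrm{dist}(o,s)^2 = \max(0,\,o_x-1/4)^2 + \max(0,\,o_y-1/4)^2 \le 1.
\]
Since both summands are nonnegative, each is at most $1$; in particular $\max(0,\,o_x-1/4)^2\le 1$ forces $o_x\le 5/4$ (trivially if $o_x\le 1/4$, and otherwise from $o_x-1/4\le 1$), and likewise $o_y\le 5/4$. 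Hence each coordinate of $o$ lies in $[0,5/4]$, so $|o_x-5/8|\le 5/8$ and $|o_y-5/8|\le 5/8$, which yields $\|o-o^1\|^2\le 2\cdot(5/8)^2 = 25/32 < 1$. Therefore $o^1\in d$, as required.

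The main thing to get right is the bookkeeping of the three side-lengths ($1/2$ for $s$, $5/2$ for $S$, and $5/4$ per quadrant) and to confirm that they conspire to keep the relevant distance strictly below $1$. The computation above shows there is genuine slack: the extremal disc centers (such as $o$ pushed to distance $1$ straight out from an edge, e.g.\ $o=(5/4,0)$, or $o$ at the center $(0,0)$) still reach only $\sqrt{25/32}\approx 0.88$. I expect no real obstacle beyond checking this worst case; the symmetry reduction to a single quadrant, together with the observation that the distance constraint already confines $o$ to $[0,5/4]^2$, is exactly what makes the bound immediate and lets me avoid a more delicate optimization over the rounded-square region of admissible disc centers.
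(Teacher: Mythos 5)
Your proof is correct. The paper itself states Proposition~\ref{tile} without proof (it is treated as an immediate consequence of the tiling geometry borrowed from Chen et al.), so your argument simply fills in the omitted verification, and it does so cleanly: the reduction to the metric statement $\mathrm{dist}(o,s)\le 1 \Rightarrow \min_i \|o-o^i\|\le 1$, the symmetry reduction to the first quadrant, and the observation that the constraint confines $o$ to the box $[0,5/4]^2$, whose half-diagonal is $\tfrac{5\sqrt{2}}{8}=\sqrt{25/32}<1$ from its center $o^1=(5/8,5/8)$, are all accurate (indeed you get strict slack, since the true worst case, e.g.\ $o=(5/4,0)$, attains exactly $\sqrt{25/32}\approx 0.88$).
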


For $d\in D_s$, let $\tau(s,d) \eqdf \min\{i : o^i \in d\}$.  For
$\tau\in \{1,\ldots,4\}$, let $D_{s,\tau}$ denote the set $\{d\in
D_s \mid \tau(s,d) = \tau, d\cap s \cap X\neq \emptyset\}$. The
following lemma shows that circles bounding the discs in
$D_{s,\tau}$ behave like pseudo-lines when restricted to a
subregion of $S$.
\begin{lemma}[\cite{chen2009online}]\label{lem:CKS}
  Let $K^{s,\tau}$ denote the convex cone with apex $o^\tau$ spanned by
  $s$. Then, for any pair of discs $d,d'\in D_{s,\tau}$, the circles
  $\partial d$ and $\partial d'$ intersect at most once in $K^{s,\tau}$.
\end{lemma}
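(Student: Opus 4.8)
The plan is to turn the statement into an elementary angle-chasing argument centered at the apex $o^\tau$. The starting observation is that every disc $d\in D_{s,\tau}$ contains $o^\tau$: by definition $\tau(s,d)=\min\{i:o^i\in d\}$, so $\tau(s,d)=\tau$ forces $o^\tau\in d$. Now suppose, for contradiction, that two distinct discs $d,d'\in D_{s,\tau}$, with centers $c,c'$, have boundary circles that meet at two distinct points $p_1,p_2$, both lying in the cone $K^{s,\tau}$. Since $K^{s,\tau}$ is a convex cone with apex $o^\tau$, the rays $o^\tau p_1$ and $o^\tau p_2$ lie in $K^{s,\tau}$, so the angle $\angle p_1 o^\tau p_2$ is at most the apex angle of $K^{s,\tau}$. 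I would therefore derive a contradiction by proving two facts: (a) $\angle p_1 o^\tau p_2 \ge 90^\circ$, and (b) the apex angle of $K^{s,\tau}$ is strictly less than $90^\circ$. Together these rule out two intersection points in the cone, and since two distinct unit circles meet in at most two points, at most one intersection point lies in $K^{s,\tau}$.

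For (a), I would use only that $o^\tau$ lies in both unit discs. The points $p_1,p_2$ are at distance $1$ from both $c$ and $c'$, hence lie on the perpendicular bisector $m$ of the segment $cc'$ and are symmetric about the line $cc'$; consequently their midpoint is $M$, the midpoint of $cc'$, and $|p_i-M|=\sqrt{1-t^2}=:R$ with $t=|cc'|/2<1$. Writing $o^\tau-M=\alpha u+\beta w$ in an orthonormal frame with $u$ along $cc'$ and $w$ along $m$, the conditions $|o^\tau-c|\le 1$ and $|o^\tau-c'|\le1$ read $(\alpha\pm t)^2+\beta^2\le1$; adding them yields $\alpha^2+\beta^2\le 1-t^2=R^2$. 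Thus $o^\tau$ lies in the closed disc of radius $R$ about $M$, on whose bounding circle $p_1=M+Rw$ and $p_2=M-Rw$ are antipodal. By Thales' theorem (and its strict version for interior points), any point of this closed disc subtends an angle of at least $90^\circ$ on the diameter $p_1p_2$, giving $\angle p_1 o^\tau p_2\ge 90^\circ$.

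For (b), I would place the origin at the common center of $s$ and $S$, so that $s=[-\tfrac14,\tfrac14]^2$ and $o^\tau=(\tfrac58,\tfrac58)$ for the top-right quadrant (the others being symmetric). The cone $K^{s,\tau}$ spanned by the square $s$ from the external apex $o^\tau$ has apex angle equal to the angular width of $s$ seen from $o^\tau$, which is attained by the two extreme corners $(-\tfrac14,\tfrac14)$ and $(\tfrac14,-\tfrac14)$; a direct computation of the angle between the vectors $o^\tau-(-\tfrac14,\tfrac14)$ and $o^\tau-(\tfrac14,-\tfrac14)$ gives roughly $44^\circ$, well below $90^\circ$. This quantitative check is the crux of the argument and is exactly where the specific side-lengths $\tfrac12$, $\tfrac52$, $\tfrac54$ of the tiling are used: they guarantee that $o^\tau$ is far enough from $s$, in a near-diagonal direction, for $s$ to subtend a small angle. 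I expect this computation, together with verifying that the chosen pair of corners indeed realizes the maximal angular spread (the square is convex, so the supporting rays from $o^\tau$ touch at corners), to be the only genuinely technical part. Finally, the degenerate cases are harmless: if $d=d'$ the discs are not distinct, if the circles are tangent they meet once, and the boundary case where $o^\tau$ lies on both circles makes $o^\tau$ itself an intersection point rather than an interior point of the cone, so the strict inequality in (b) still closes the argument.
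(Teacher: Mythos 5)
First, a point of reference: the paper does not prove this lemma at all --- it is imported verbatim from Chen et al.~\cite{chen2009online} --- so your proposal can only be judged on its own merits. Its core is sound, and it is the natural (and, in essence, the known) argument: every $d\in D_{s,\tau}$ contains $o^\tau$; if the two boundary circles met twice inside $K^{s,\tau}$, the intersection points would subtend an angle of at least $90^\circ$ at $o^\tau$, whereas the opening angle of $K^{s,\tau}$ is well below $90^\circ$. Both of your key steps check out: in (a), summing $(\alpha\pm t)^2+\beta^2\le 1$ does give $\alpha^2+\beta^2\le 1-t^2$, so $o^\tau$ lies in the closed disc having $p_1p_2$ as a diameter, and Thales applies; in (b), the extreme directions from $o^\tau=(\tfrac58,\tfrac58)$ to $s$ are indeed attained at the corners $(-\tfrac14,\tfrac14)$ and $(\tfrac14,-\tfrac14)$, giving $\cos\theta=\tfrac{21}{29}$, i.e.\ $\theta\approx 43.6^\circ<90^\circ$.

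The one genuine flaw is your last sentence, which dismisses the degenerate case $o^\tau\in\partial d\cap\partial d'$. In that case $o^\tau$ is itself one of the two intersection points, the angle $\angle p_1 o^\tau p_2$ is undefined, and --- contrary to your claim --- nothing ``closes the argument'': the apex belongs to the closed cone, and the second intersection point can lie in the cone as well. Concretely, with $o^\tau=o^1=(\tfrac58,\tfrac58)$, take unit discs centered at $c=o^1+(\cos 220^\circ,\sin 220^\circ)$ and $c'=o^1+(\cos 230^\circ,\sin 230^\circ)$. Both centers lie in $s$, both discs contain the center of $s$ (so, placing a point of $X$ there, both belong to $D_{s,1}$), both circles pass through $o^1$, and their second intersection point $c+c'-o^1$ lies on the ray from $o^1$ in direction $225^\circ$, squarely inside $K^{s,1}$. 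Thus, with the paper's closed discs, the statement as displayed literally fails in this measure-zero configuration, so no patch of your argument can cover it. What your computation actually proves is the lemma under the additional hypothesis that $o^\tau$ lies in the \emph{interior} of at least one of the two discs (then the inequality in (a) is strict and $o^\tau$ is not an intersection point); that hypothesis, or an equivalent general-position assumption on the adversary's discs, is what should be stated explicitly --- and is what the downstream use in Lemma~\ref{lem:distinct} can be arranged to satisfy by perturbing the witness discs --- rather than asserting the boundary case is harmless.
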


\begin{figure}
\centering
\includegraphics[width=0.6\textwidth]{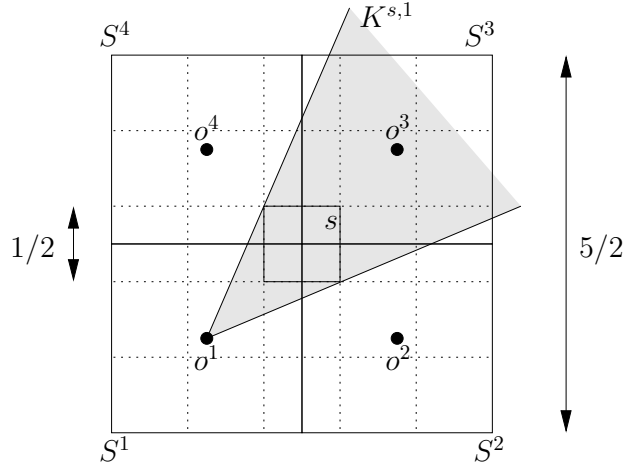}
\caption{A partitioning of the plane from Chen et al.~\cite{chen2009online}}
 \label{fig:tile}
 \end{figure}

\paragraph{Extreme points.}
For every square tile $s$ and every $\tau\in\{1,\ldots,4\}$, we define
a set $V_{s,\tau}$ of \emph{extreme points} as follows.
\[
V_{s,\tau} \eqdf \{ x \in X | \exists d\in D_{s,\tau}: d\cap s \cap X = \{x\}\}.
\]
Note that if $d\in D_{s,\tau}$ and $d\cap s \cap X\neq\emptyset$,
then $d\cap V_{s,\tau} \neq \emptyset$.

Let $\theta_{s,\tau}: V_{s,\tau} \rightarrow [0,2\pi]$ denote an
\emph{angle function}, where $\theta_{s,\tau}(x)$ equals the slope of
the line $o^\tau x$.
Let $\{p_i\}_{i=1}^{|V_{s,\tau}|}$ denote an ordering of $V_{s,\tau}$
in increasing $\theta_{s,\tau}$ order.  For a disc $d\in D_{s,\tau}$,
we say that $d\cap V_{s,\tau}$ is an \emph{interval} if there exist
  $i,k$ such that $d\cap V_{s,\tau}= \{ p_j \mid i\leq j \leq k\}$.
\begin{proposition}[\cite{cheilaris2010choosability}]\label{prop:angle}
  The angle function $\theta_{s,\tau}$ is one-to-one, and $d\cap
  V_{s,\tau}$ is an interval, for every $d\in D_{s,\tau}$.
\end{proposition}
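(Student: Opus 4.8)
The plan is to view each disc $d\in D_{s,\tau}$ in polar coordinates centered at the apex $o^\tau$. Since $\tau(s,d)=\tau$ forces $o^\tau\in d$, and (in general position) $o^\tau$ lies in the interior of $d$, every ray emanating from $o^\tau$ crosses the circle $\partial d$ exactly once. Restricting to the angular interval subtended by the convex tile $s$ — which is exactly the angular range of the cone $K^{s,\tau}$ — this yields a continuous radial boundary function $\rho_d(\theta)$ with the property that a point $p$ at angle $\theta_{s,\tau}(p)$ and distance $r_p$ from $o^\tau$ lies in $d$ if and only if $r_p\le\rho_d(\theta_{s,\tau}(p))$. All points of $V_{s,\tau}$ lie in $s$, hence in the cone, so this representation applies to them.

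For injectivity of $\theta_{s,\tau}$ I would argue by convexity. Since $o^\tau\notin s$, all of $V_{s,\tau}\subseteq s$ lies in a wedge of angle less than $\pi$ from $o^\tau$, so two points with the same angle lie on a common ray from $o^\tau$. Suppose two distinct such points $x,y\in V_{s,\tau}$ exist, with $x$ on the segment $[o^\tau,y]$. Because $y\in V_{s,\tau}$, there is a disc $d\in D_{s,\tau}$ isolating $y$ within $s$, i.e.\ $d\cap s\cap X=\{y\}$. This disc contains $o^\tau$ and $y$, hence by convexity the whole segment $[o^\tau,y]$, and in particular $x$; but $x\in s\cap X$ and $x\ne y$, contradicting $d\cap s\cap X=\{y\}$. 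Thus no two extreme points share an angle.

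For the interval property I would translate Lemma~\ref{lem:CKS} into a statement about radial functions: since $\partial d$ and $\partial d'$ meet in at most one point of the cone, the continuous function $\rho_d-\rho_{d'}$ changes sign at most once over the cone's angular interval. Suppose, for contradiction, that $d\cap V_{s,\tau}$ is not an interval. Then in the angular order there are points $p_i,p_j,p_k$ with $\theta_{s,\tau}(p_i)<\theta_{s,\tau}(p_j)<\theta_{s,\tau}(p_k)$ such that $p_i,p_k\in d$ but $p_j\notin d$. Let $d_j\in D_{s,\tau}$ be a disc isolating $p_j$, so that $p_j\in d_j$ while $p_i,p_k\notin d_j$ (both being in $s\cap X$). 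Reading the radial inequalities at the three angles gives $\rho_d-\rho_{d_j}>0$ at $\theta_{s,\tau}(p_i)$, then $\rho_d-\rho_{d_j}<0$ at $\theta_{s,\tau}(p_j)$, and again $\rho_d-\rho_{d_j}>0$ at $\theta_{s,\tau}(p_k)$. By the intermediate value theorem this forces two sign changes, hence two intersection points of $\partial d$ and $\partial d_j$ inside the cone, contradicting Lemma~\ref{lem:CKS}. Therefore $d\cap V_{s,\tau}$ is an interval.

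The main obstacle is the faithful translation of the geometric ``pseudo-line'' property into the analytic claim that $\rho_d-\rho_{d'}$ has at most one zero. I must verify that each circle intersection occurring at an angle strictly between $\theta_{s,\tau}(p_i)$ and $\theta_{s,\tau}(p_k)$ indeed lies inside $K^{s,\tau}$ — which holds because those angles lie in the convex angular range subtended by $s$ and the cone is the full infinite wedge of that range — and that $\rho_d$ is genuinely single-valued and continuous, which relies on $o^\tau$ lying in the interior of every disc of $D_{s,\tau}$. The degenerate case $o^\tau\in\partial d$ can be ruled out by a general-position perturbation, after which the argument goes through unchanged.
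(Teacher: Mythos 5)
Your proof is correct, but note that the paper itself offers nothing to compare against: Proposition~\ref{prop:angle} is imported by citation from \cite{cheilaris2010choosability} and never proved in this text, so your argument is a self-contained replacement rather than a variant of an existing one. Both halves of your argument are sound and use only what the paper provides. For injectivity, the key observation that every $d\in D_{s,\tau}$ contains $o^\tau$ (by definition of $\tau(s,d)$) combines with convexity of discs exactly as you say: a disc isolating the farther of two collinear points would contain the segment $[o^\tau,y]$ and hence the nearer point, a contradiction; and your preliminary remark that $s$ lies in an open half-plane through $o^\tau$ (so equal slopes force a common ray) is the right way to rule out antipodal coincidences of the slope function. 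For the interval property, the reduction of Lemma~\ref{lem:CKS} to the statement that $\rho_d-\rho_{d_j}$ changes sign at most once, followed by the three-point sign pattern and the intermediate value theorem, is a clean and standard pseudo-line argument; the isolating disc $d_j\in D_{s,\tau}$ for the middle point $p_j$ plays the same role that the isolating disc $d_t$ plays in the paper's proof of Lemma~\ref{lem:distinct}, so your proof is stylistically consistent with how the paper uses the proposition downstream.

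Two small points deserve tightening. First, you should note explicitly that the two zeros produced by the intermediate value theorem yield \emph{distinct} points of $\partial d\cap\partial d_j$: they occur at distinct angles, and the corresponding radii are strictly positive because $\rho_d\geq r_{p_i}>0$ at $\theta_{s,\tau}(p_i)$, $\rho_d\geq r_{p_k}>0$ at $\theta_{s,\tau}(p_k)$, and the set $\{\theta : \rho_d(\theta)>0\}$ is an arc of length at most $\pi$ containing both of these angles, hence containing the whole (shorter) interval between them. Second, with that observation the degenerate case $o^\tau\in\partial d$ needs no perturbation at all: the radial function $\rho_d(\theta)=\max\bigl(0,\,2\cos(\theta-\theta_c)\bigr)$ is still single-valued and continuous, the membership criterion $r_p\leq\rho_d(\theta_p)$ still holds by convexity, and the positivity argument above shows both intersection points are bona fide points of the two circles inside $K^{s,\tau}$. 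Your perturbation remark is harmless, but it is also the one step whose details (preserving membership in $D_{s,\tau}$ while moving a unit disc) you did not verify, so it is worth knowing it can simply be dropped.
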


\paragraph{Vertex ranking.}
Let $P_{s,\tau}$ denote the path graph over $V_{s,\tau}$ where $p_i$
is a neighbor of $p_{i+1}$ for $i = 1,\ldots, |V_{s,\tau}|-1$.  Let
$c^{s,\tau}: V_{s,\tau}\rightarrow \NN$ denote a vertex ranking with
respect to $P_{s,\tau}$ that uses $\lfloor \log_2 (2|V_{s,\tau}|)
\rfloor$ colors.

Consider a disc $d\in D_{s,\tau}$. Let $r=r(d)$ denote the range
$d\cap X$.  Assume that $r\cap s\neq \emptyset$.  Let
$c_{\max}^{c,\tau} (r) \eqdf \max \{ c^{s,\tau} (v) \mid v\in r\cap
V_{s,\tau}\}$. Let $v_{\max}^{s,\tau} (r)$ denote the vertex $v\in r\cap V_{s,\tau}$
such that $c^{s,\tau}(v)=c_{\max}^{s,\tau}(r)$.

\subsubsection{Algorithm Description}
A listing of the algorithm \algd\ appears as Algorithm~\ref{alg:d}.
The algorithm requires the following preprocessing:
\begin{inparaenum}[(i)]
\item Compute a tiling of the plane with $1/2\times 1/2$ squares.
  Each point $x\in X$ must lie in the interior of a tile. This is easy
  to achieve since $X$ is finite.
\item For every tile $s$, compute the four types of extreme points
  $V_{s,\tau}$, and order each $V_{s,\tau}$ in increasing
  $\theta_{s,\tau}$ order.
\item Compute a vertex ranking $c^{s,\tau}$ for each $V_{s,\tau}$.
\end{inparaenum}
The algorithm maintains a hitting set $C_i$ of the $i-1$ ranges
$\{r_1,\ldots,r_{i-1}\}$ that have been input so far.  Upon
arrival of a range $r_i=r(d_i)$, if it is stabbed by $C_{i-1}$,
then simply update $C_i\gets C_{i-1}$.  Otherwise, a vertex
$v_{i,s}$ is selected from each square tile $s$ such that $r_i
\cap s \neq \emptyset$. These vertices are added to $C_{i-1}$ to
obtain $C_i$.

Lemma~\ref{lem:CKS} provides an interpretation of Algorithm \algd\ as
a reduction to the case of hitting subsets of points below a
pseudo-line (i.e., pseudo half-planes). Each square tile $s$ and type
$\tau\in\{1,\ldots,4\}$ defines an instance of points and pseudo
half-planes with respect to the set $X_s\eqdf X\cap s$ of points and
the subsets $d\cap X_s$ for discs $d\in D_{s,\tau}$. The algorithm
maintains a different invocation of \algp\ for each square $s$ and
type $\tau$. Upon arrival of an unstabbed disc $d$, the algorithm
inputs the range $d\cap s \cap X$ to each instance of \algp\
corresponding to a square $s$ and a type $\tau$ such that $d\in
D_{s,\tau}$.

\begin{algorithm}
  \caption{\algd$(X)$ - an online  hitting set for unit discs.}
\label{alg:d}
\begin{algorithmic}[1]
  \REQUIRE $X\subset \RR^2$ is a set of $n$ points. A tiling by
  $1/2\times 1/2$ squares. Four types of extreme points $V_{s,\tau}$
  per tile. A vertex ranking $c^{s,\tau}$ of $V_{s,\tau}$ with respect
  to the ``angular'' order.  \STATE $C_0\gets \emptyset$
  \FOR[arrival of a range $r_i=r(d_i)$]{$i = 1$ to $\infty$}
  \IF{$r_i$ not stabbed by $C_{i-1}$}
\FORALL{square tiles $s$ such that $r_i\cap s\neq\emptyset$}
\STATE $\tau \gets \tau(s,d_i)$ \COMMENT{find the type of $d_i$ wrt $s$}
\STATE $v_{s,i} \gets v^{s,\tau}_{\max} (r_i \cap V_{s,\tau})$ \COMMENT{find the vertex with the max color}
\STATE  $C_i\gets C_{i-1} \cup \{v_{s,i}\}$
\ENDFOR
\ELSE \STATE $C_i\gets C_{i-1}$
\ENDIF \ENDFOR
\end{algorithmic}
\end{algorithm}

\subsubsection{Analysis of The Competitive Ratio}

Let $\sigma=\{r_i\}_i$ denote the input sequence. Let
$\sigma^A\subseteq \sigma$ denote the subsequence of ranges $r_i$ such
that $r_i$ is unstabbed upon arrival (i.e., $r_i$ is not stabbed by
$C_{i-1}$).
\begin{proposition}\label{prop:25}
$|\algd(\sigma)| \leq 16 \cdot |\sigma^A|$.
\end{proposition}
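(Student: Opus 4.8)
The plan is to charge the points added by $\algd$ to the unstabbed ranges of $\sigma^A$ and to show that each such range is charged at most $16$ points. Recall that $\algd$ enlarges the hitting set only when processing a range $r_i$ that is unstabbed upon arrival, i.e.\ when $r_i\in\sigma^A$. For such a range the algorithm adds exactly one point $v_{s,i}$ for every tile $s$ with $r_i\cap s\neq\emptyset$. Since distinct tiles are disjoint and each selected vertex $v_{s,i}$ lies in the interior of its tile $s$, the points added while processing a single $r_i$ are pairwise distinct. Hence, writing $N(r_i)$ for the number of tiles $s$ with $r_i\cap s\neq\emptyset$, we get $|\algd(\sigma)|\le\sum_{r_i\in\sigma^A}N(r_i)$. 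The inequality (rather than equality) already absorbs the fact that $C_0\subseteq C_1\subseteq\cdots$ is a chain, so that a point added for one range is never recounted for a later one; this overcounting only strengthens the bound, so the bookkeeping here is routine.

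First I would check that the additions are well defined, so that ``one point per meeting tile'' is the correct count. If $r_i\cap s\neq\emptyset$ then $d_i\cap s\cap X\neq\emptyset$, so in particular $d_i\in D_s$; by Proposition~\ref{tile} the disc $d_i$ then contains one of the quadrant centers $o^1,\dots,o^4$, whence $\tau(s,d_i)$ is defined and $d_i\in D_{s,\tau(s,d_i)}$. Consequently $r_i\cap V_{s,\tau(s,d_i)}\neq\emptyset$ and the vertex $v_{s,i}=v^{s,\tau}_{\max}(r_i\cap V_{s,\tau})$ exists, confirming that precisely one point is contributed by each tile meeting $r_i$.

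The heart of the proof is the geometric estimate $N(r_i)\le 16$ for every range $r_i=r(d_i)$, which reduces to bounding the number of $\tfrac12\times\tfrac12$ tiles that a single unit disc $d_i$ can meet. Indeed, any point of $X$ witnessing $r_i\cap s\neq\emptyset$ lies in the closed disc $d_i$ and in the tile $s$, so every tile counted by $N(r_i)$ meets $d_i$. Writing $o$ for the center of $d_i$, every such tile is contained in the axis-parallel square of side $2$ centered at $o$ (the disc has diameter $2$), and one must count how many tiles of side $\tfrac12$ this region can meet. I expect this planar packing estimate to be the only real obstacle: a crude bounding-box argument yields $5\times5=25$, and extracting the sharper constant requires exploiting the roundness of the disc rather than merely its bounding box, e.g.\ by comparing, for each candidate tile, the distance from $o$ to the tile against the radius $1$.

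Finally, combining the two parts gives $|\algd(\sigma)|\le\sum_{r_i\in\sigma^A}N(r_i)\le 16\,|\sigma^A|$, which is the claim. I would stress that for the downstream $O(\log n)$ competitive-ratio bound only the fact that $N(r_i)$ is bounded by an absolute constant matters; the delicate part is solely the two-dimensional packing count that fixes this constant, while the reduction of $|\algd(\sigma)|$ to $\sum_{r_i\in\sigma^A}N(r_i)$ follows directly from the structure of the algorithm.
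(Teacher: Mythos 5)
Your overall structure is exactly the paper's: the paper's entire proof is the two sentences ``Each disc intersects at most $16$ square tiles. Upon arrival of an unstabbed disc, at most one point is added to the hitting set, for each intersected square.'' Your charging bookkeeping --- one well-defined point $v_{s,i}$ per tile $s$ meeting $r_i$ (via Proposition~\ref{tile}), distinctness of these points because $v_{s,i}\in V_{s,\tau}\subseteq s\cap X$ and tiles are disjoint --- is correct and in fact more careful than the paper's.

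The genuine gap is the step you explicitly defer: the packing estimate $N(r_i)\le 16$. This is not a routine computation you can postpone, because the estimate is \emph{false}, so no ``roundness'' argument will produce the constant $16$. Take the unit disc (radius $1$) centered at a tile center, say at $(1/4,1/4)$ with grid lines at multiples of $1/2$. Its bounding box meets the $5\times5$ block of tiles, and roundness removes only the four corner tiles of that block: the nearest corner of a corner tile, e.g.\ $(1,1)$, is at distance $\tfrac34\sqrt2\approx1.06>1$, but every non-corner tile is met --- e.g.\ the tile $[1,3/2]\times[1/2,1]$ has nearest point $(1,1/2)$, at distance $\sqrt{(3/4)^2+(1/4)^2}=\sqrt{10}/4<1$. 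That is already $21>16$ tiles, and each of them may contain a point of $X$ inside the disc, in which case \algd\ adds $21$ points upon arrival of this single unstabbed range; an off-center placement such as $(0.1,0.25)$ even yields $23$ tiles (and $23$ is sharp, since at most one corner tile from each diagonally opposite pair can lie within distance $1$). So the constant $16$ --- which the paper's own one-line proof also asserts without justification --- is what one gets in the \emph{best} alignment (center at a grid vertex), not the worst. The honest repair is trivial and is what your first two paragraphs really need: a segment of length $2$ meets at most $5$ grid intervals of length $1/2$ (degenerate boundary touchings are irrelevant since each point of $X$ lies in a tile's interior), so $N(r_i)\le 5\times5=25$ and $|\algd(\sigma)|\le 25\cdot|\sigma^A|$. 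As you correctly note, only the existence of an absolute constant matters downstream: Lemma~\ref{lem:local} then holds with $64$ replaced by $100$, and Theorem~\ref{thm:UB-discs} is unaffected.
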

\begin{proof}
  Each disc intersects at most $16$ square tiles. Upon arrival of an
  unstabbed disc, at most one point is added to the hitting set, for
  each intersected square.
\end{proof}

The following lemma shows that, if two discs contain a common point
$x\in s$, are of the same type $\tau$, and are unstabbed upon arrival,
then they are stabbed by extreme points in $V_{s,\tau}$ of different
colors.
\begin{lemma}\label{lem:distinct}
  If $x\in X\cap s$, $r_i,r_j \in \sigma^A \cap
  D_{s,\tau}$ and $x\in r_i\cap r_j$, then $c^{s,\tau}(v_{s,i}) \neq c^{s,\tau}(v_{s,j})$.
\end{lemma}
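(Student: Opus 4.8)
The plan is to argue by contradiction, assuming $c^{s,\tau}(v_{s,i}) = c^{s,\tau}(v_{s,j})$, and to reduce the situation to a double crossing of two circles inside the cone $K^{s,\tau}$, which is forbidden by Lemma~\ref{lem:CKS}. The proof parallels that of Lemma~\ref{lem:LB-algp}. First I would record two structural facts. Since every point of $V_{s,\tau}$ is witnessed by a disc whose only point of $X\cap s$ is that vertex, we have $V_{s,\tau}\subseteq s\subseteq K^{s,\tau}$; this guarantees that every extreme point evaluated later lies in the cone where the pseudo-line property holds. Second, and crucially, the hypothesis $r_i,r_j\in\sigma^A$ forces $v_{s,i}\neq v_{s,j}$: if (say) $r_i$ arrived before $r_j$, the algorithm placed $v_{s,i}$ in the hitting set upon processing $r_i$; were $v_{s,j}=v_{s,i}$, then since $v_{s,j}\in r_j$ this common point would already stab $r_j$ on arrival, contradicting $r_j\in\sigma^A$.

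Next I would use the coloring to extract a separating extreme point. Write $I_i=d_i\cap V_{s,\tau}$ and $I_j=d_j\cap V_{s,\tau}$; by Proposition~\ref{prop:angle} these are intervals in the angular order, and $v_{s,i},v_{s,j}$ are their unique maximum-color vertices, both colored $\gamma\eqdf c^{s,\tau}(v_{s,i})$. If $I_i\cup I_j$ were itself an interval of $P_{s,\tau}$, then the unique path between the distinct vertices $v_{s,i}$ and $v_{s,j}$ would lie inside it, with every internal vertex colored at most $\gamma$, contradicting the vertex-ranking property of $c^{s,\tau}$. Hence $I_i\cup I_j$ is not an interval, so there is an extreme point $t\in V_{s,\tau}$ lying strictly between $I_i$ and $I_j$ in the angular order; relabeling if necessary, say $I_i$ precedes $t$ which precedes $I_j$. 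The same overlap argument shows $x\notin V_{s,\tau}$ (otherwise $x\in I_i\cap I_j$), so $x\neq t$, and $t\notin d_i,d_j$ because $t\in V_{s,\tau}$ but $t\notin I_i,I_j$.

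The heart of the argument is then the crossing count, for which I would fix the apex $o^\tau$ and read each disc radially: because $o^\tau\in d$ for every $d\in D_{s,\tau}$, a point of the cone lies in $d$ exactly when it is nearer to $o^\tau$ than $\partial d$ along its ray, and Lemma~\ref{lem:CKS} says $\partial d_j$ and $\partial d_t$ cross at most once in $K^{s,\tau}$, where $d_t\in D_{s,\tau}$ is a witness for $t$ with $d_t\cap s\cap X=\{t\}$. I would then compare $\partial d_j$ and $\partial d_t$ radially at the three cone points $x,t,a_j$, where $a_j$ is the endpoint of $I_j$ nearest $t$ (taking the sub-case $\theta_{s,\tau}(x)\leq\theta_{s,\tau}(t)$; the other sub-case is symmetric, using $I_i$, $b_i$, and $d_i$). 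The memberships $x\in d_j,\ x\notin d_t$, then $t\notin d_j,\ t\in d_t$, then $a_j\in d_j,\ a_j\notin d_t$ (the last because $a_j\in s\cap X$ and $a_j\neq t$) make the sign of the radial difference alternate $+,-,+$ along the angular order $\theta_{s,\tau}(x)\leq\theta_{s,\tau}(t)<\theta_{s,\tau}(a_j)$, forcing two crossings of $\partial d_j$ and $\partial d_t$ in the cone and contradicting Lemma~\ref{lem:CKS}.

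I expect the main obstacle to be the geometric bookkeeping of the third step: translating disc membership into the radial above/below relation in the cone and verifying that the three evaluation points genuinely lie in $K^{s,\tau}$ (handled by $V_{s,\tau}\subseteq s$) and witness honest sign changes rather than mere tangencies. The role of the hypothesis $r_i,r_j\in\sigma^A$ in guaranteeing $v_{s,i}\neq v_{s,j}$ is subtle but essential, and I would flag it explicitly, since without it the claimed inequality can fail.
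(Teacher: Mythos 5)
Your proposal is correct and follows essentially the same route as the paper's own proof: assume equal colors, use the unique-max/vertex-ranking property to find a separating extreme point $t$ between the angular intervals $I_i$ and $I_j$, take the witness disc $d_t\in D_{s,\tau}$ for $t$, and derive a double crossing of $\partial d_j$ (or $\partial d_i$) with $\partial d_t$ inside the cone $K^{s,\tau}$, contradicting Lemma~\ref{lem:CKS}. Your explicit justification that $r_i,r_j\in\sigma^A$ forces $v_{s,i}\neq v_{s,j}$, and your radial reading of ``above/below'' from the apex $o^\tau$, merely spell out steps the paper leaves implicit.
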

\begin{proof}
  To shorten notation let $c=c^{s,\tau}$, $V=V_{s,\tau}$, and
  $\theta=\theta^{s,\tau}$.  Assume for the sake of contradiction that
  $c(v_{s,i}) = c(v_{s,j})$. By Prop.~\ref{prop:angle}, $r_i\cap V$ is
  an interval, which we denote by $I_i = [a_i,b_i]$.  Similarly,
  $I_j=[a_j,b_j]$ is the interval for $r_j\cap V$.  Since $c^{s,\tau}$
  is a unique-max coloring of the intervals in $V_{s,\tau}$, $I_i\cup
  I_j$ is not an interval, so there must be an extreme point in between
  the intervals. Denote this in between point by $t$. Without loss of
  generality, $\theta(b_i)<\theta(t)<\theta(a_j)$. Assume that
  $\theta(z)\leq \theta(t)$. Consider a disc $d_j\in D_{s,\tau}$ such
  that $r_j=r(d_j)$. Consider a disc $d_t\in D_{s,\tau}$ such that
  $d_t\cap X_s = \{t\}$. We claim that the circles $\partial d_j$ and
  $\partial d_t$ intersect twice in the cone $K^{s,\tau}$,
  contradicting Lemma~\ref{lem:CKS}. Indeed, $\partial d_t$ passes
  ``below'' $x$, ``above'' $t$, and ``below'' $a_j$.  On the other
  hand, $\partial d_j$ passes above $x$, below $t$, and above $a_j$.
  The case $\theta(z)> \theta(t)$ is proved similarly by considering
  the discs $d_t$ and $d_i$.
\end{proof}

Let $\sigma(x)$ denote the subsequence of ranges $r_i$ such that $x\in
r_i$. The following lemma proves that the algorithms stabs a sequence
of discs that share a common point by $O(\log n)$ points.
\begin{lemma}\label{lem:local}
For every $x\in X$, $|\algd (\sigma(x))| \leq 64 \cdot \lfloor
\log_2 (2n) \rfloor$.
\end{lemma}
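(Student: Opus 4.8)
The plan is to bound the number of points that $\algd$ selects while processing the subsequence $\sigma(x)$ of discs containing a fixed point $x\in X$. The key observation is that $\algd$ is really running many independent copies of the interval-hitting algorithm $\algc$, one per pair $(s,\tau)$, and a disc in $\sigma(x)$ only triggers a point-addition in those copies whose tile $s$ contains a point of the disc. I would first localize the analysis to the tiles and types that are relevant to $x$. Since $x$ lies in the interior of a unique tile, call it $s_x$, and since every disc $d\ni x$ satisfies $d\cap s_x\neq\emptyset$ (as $x\in d\cap s_x$), the relevant instances are those $(s,\tau)$ with $d\cap s\neq\emptyset$ for discs containing $x$. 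A unit disc meets at most $16$ tiles, and for each tile there are $4$ types, so at most $64$ instances $(s,\tau)$ are ever charged a point on account of a disc in $\sigma(x)$.

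Next I would bound, for each single relevant instance $(s,\tau)$, the number of points added while processing the discs of $\sigma(x)$ that are of type $\tau$ with respect to $s$. The crucial input here is Lemma~\ref{lem:distinct}: any two discs $r_i,r_j\in\sigma^A\cap D_{s,\tau}$ that share the common point $x$ are stabbed by extreme points $v_{s,i},v_{s,j}\in V_{s,\tau}$ receiving \emph{distinct} colors under $c^{s,\tau}$. Consequently, restricted to a single pair $(s,\tau)$, the points added by $\algd$ on the discs of $\sigma(x)$ all carry different colors of the vertex ranking $c^{s,\tau}$. Since $c^{s,\tau}$ uses at most $\lfloor\log_2(2|V_{s,\tau}|)\rfloor\leq\lfloor\log_2(2n)\rfloor$ colors (because $|V_{s,\tau}|\leq|X|=n$), each relevant instance contributes at most $\lfloor\log_2(2n)\rfloor$ points to $\algd(\sigma(x))$.

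Finally I would combine the two bounds. Every point added to the hitting set while processing a disc of $\sigma(x)$ is charged to one of the at most $64$ relevant instances $(s,\tau)$, and each such instance contributes at most $\lfloor\log_2(2n)\rfloor$ points by the color-distinctness argument. Multiplying gives $|\algd(\sigma(x))|\leq 64\cdot\lfloor\log_2(2n)\rfloor$, as claimed. The main subtlety to handle carefully is the bookkeeping that guarantees that only $64$ instances are ever charged: one must verify that a disc $d\ni x$ can add a point in instance $(s,\tau)$ only when $d\cap s\neq\emptyset$ and $\tau=\tau(s,d)$, and that for a fixed disc this singles out one type per tile, so that across all discs of $\sigma(x)$ the union of charged instances has size at most $16\cdot 4=64$. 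The per-instance distinct-color bound is then immediate from Lemma~\ref{lem:distinct}, and no further geometric argument beyond what has already been established is required.
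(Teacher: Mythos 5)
Your charging scheme has two genuine gaps, both stemming from the same conflation. First, the bookkeeping claim that ``at most $16\cdot 4=64$ instances $(s,\tau)$ are ever charged'' is false: the bound of $16$ tiles is \emph{per disc}, but different discs of $\sigma(x)$ intersect different tiles. Two unit discs that both contain $x$ can extend in opposite directions from $x$, and as the discs of $\sigma(x)$ vary, the tiles they meet range over all tiles within distance $2$ of $x$ --- with side length $1/2$ that is many dozens of tiles, far more than $16$. So the union of charged instances is not bounded by $64$, and your final verification step (``across all discs of $\sigma(x)$ the union of charged instances has size at most $16\cdot 4$'') cannot be carried out. Second, and more fundamentally, your per-instance bound is unjustified for any tile $s\neq s_x$: Lemma~\ref{lem:distinct} has the hypothesis $x\in X\cap s$, i.e., the shared point must lie \emph{in the tile} $s$ itself. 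For a tile that the discs of $\sigma(x)$ merely intersect, with $x$ outside it, the lemma gives no color-distinctness; its proof needs the common point inside the cone $K^{s,\tau}$ to derive the two-crossings contradiction. So neither factor in your product $64\cdot\lfloor\log_2(2n)\rfloor$ is established.

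The repair is to reverse the roles of the two factors, which is what the paper does. Work only with the home tile $s_x$ containing $x$: every disc of $\sigma(x)$ contains $x$, hence intersects $s_x$ and has one of $4$ types with respect to $s_x$. Lemma~\ref{lem:distinct} now applies legitimately (its hypothesis $x\in X\cap s_x$ holds), so within each type the unstabbed discs of $\sigma(x)$ are stabbed, in the instance $(s_x,\tau)$, by points of pairwise distinct colors; since $c^{s_x,\tau}$ uses at most $\lfloor\log_2(2n)\rfloor$ colors, the number of \emph{unstabbed discs} in $\sigma(x)$ is at most $4\cdot\lfloor\log_2(2n)\rfloor$. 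Then invoke Proposition~\ref{prop:25}: each unstabbed disc adds at most $16$ points, one per tile it meets --- this is where the $16$ belongs, as a per-disc cost, not as a bound on the set of relevant tiles. Multiplying gives $|\algd(\sigma(x))|\leq 16\cdot 4\cdot\lfloor\log_2(2n)\rfloor = 64\cdot\lfloor\log_2(2n)\rfloor$. In short, the $4$ counts types at the single home tile and controls how many discs get through unstabbed, while the $16$ is the price paid per such disc; in your proposal both numbers were asked to control the family of charged instances, which they cannot do.
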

\begin{proof}
  Fix a point $x\in X$, and let $s$ denote the tile such that $x\in
  s$. Let $\sigma^A(x)$ denote the sequence of ranges in $\sigma(x)$
  that were unstabbed upon arrival in an execution of
  $\alg(\sigma(x))$. By Prop.~\ref{prop:25}, $|\algd (\sigma(x))| \leq
  16\cdot |\sigma^A(x)|$.

  The disc $d_i$ of each range $r_i\in \sigma^A(x)$ belongs to one of
  four types $D_{s,\tau}$, for $1\leq\tau\leq 4$ By<
  Lemma~\ref{lem:distinct}, the ranges in $\sigma^A(x) \cap D_{s,\tau}$ are
  stabbed by extreme points in $V_{s,\tau}$, the colors of which are
  distinct. Each vertex ranking $c^{s,\tau}$ uses at most $\lfloor
  \log_2 (2n) \rfloor$ colors. Thus, $|\sigma^A(x)| \leq
  \sum_{\tau=1}^4 |\sigma^A(x)\cap V_{s,\tau}| \leq 4\cdot
  \lfloor \log_2 (2n) \rfloor$, and the lemma follows.
\end{proof}

\begin{proof}[Proof of Theorem~\protect\ref{thm:UB-discs}] Consider an
execution of $\algd(\sigma)$ and independent
executions of $\algd(\sigma(x))$, for every $x\in \opt(\sigma)$.
Every time $\algd(\sigma)$ is input an unstabbed range $r_i$, at least
one of the executions of $\algd(\sigma(x))$ is also input $r_i$, and
$r_i$ is also unstabbed upon arrival. This implies that
$|\algd(\sigma)| \leq \sum_{x\in \opt(\sigma)} |\algd(\sigma(x))|$.

By Lemma~\ref{lem:local}, $ |\algd(\sigma(x))| = O(\log n)$. This
implies that $|\algd(\sigma)| = O(\log n) \cdot |\opt(\sigma)|$, and
the theorem follows.
\end{proof}

\section{Discussion}
We would like to suggest two open problems.
\begin{enumerate}
\item Design an online hitting set algorithm for points and arbitrary
  discs, the competitive ratio of which is $o(\log^2 n)$ or prove a lower bound of $\Omega(\log^2 n)$.
\item Design an online hitting set algorithm with a logarithmic
  competitive ratio for any hypergraph with bounded VC-dimension or obtain a lower bound as above. Alon
  et al. obtain an $O(\log^2 n)$ competitive ratio, and the best
  known lower bound is $\Omega(\log n)$.
\end{enumerate}

\section*{Acknowledgments}
We would like to thank Nabil Mustafa and Saurabh Ray for discussions
on the open problems suggested in the last section.

\bibliographystyle{alpha}
\bibliography{hitting-sets-online}

\appendix

\end{document}